\newcommand{\av}{\mathbf{a}}
\newcommand{\bv}{\mathbf{b}}
\newcommand{\cv}{\mathbf{c}}
\newcommand{\kv}{\mathbf{k}}
\newcommand{\uv}{\mathbf{u}}
\newcommand{\vv}{\mathbf{v}}
\newcommand{\xv}{\mathbf{x}}
\newcommand{\yv}{\mathbf{y}}
\newcommand{\zv}{\mathbf{z}}
\newcommand{\am}{\mathbf{A}}
\newcommand{\gm}{\mathbf{G}}
\newcommand{\km}{\mathbf{K}}
\newcommand{\um}{\mathbf{U}}
\newcommand{\vm}{\mathbf{V}}
\newcommand{\xm}{\mathbf{X}}
\newcommand{\zm}{\mathbf{Z}}
\newtheorem{lemma}{Lemma}
\newtheorem{cor}{Corollary}
\newtheorem{theor}{Theorem}
\newtheorem{example}{Example}
\newenvironment{proof}{\noindent {\bf Proof. \ }}{\hfill \vrule height 2pt depth 4pt width 6pt\par\noindent}
\begin{document}

\title{
An Information-Theoretic Analysis \\
of the Security of Communication Systems \\
Employing the Encoding-Encryption Paradigm
}
\author
{Fr\'ed\'erique Oggier and Miodrag J. Mihaljevi\'c
\thanks{
Fr\'ed\'erique Oggier is with Division of Mathematical Sciences,
School of Physical and Mathematical Sciences, Nanyang
Technological University, Singapore. Miodrag Mihaljevi\'c is with
Mathematical Institute, Serbian Academy of Sciences and Arts,
Belgrade, Serbia, and with Research Center for Information
Security (RCIS), Institute of Advanced Industrial Science and
Technology (AIST), Tokyo, Japan.  Email: frederique@ntu.edu.sg,
miodragm@turing.mi.sanu.ac.rs. Part of this work already appeared
at IEEE ICT 2010. } } \maketitle

\begin{abstract}
This paper proposes a generic approach for providing enhanced
security to communication systems which encode their data for
reliability before encrypting it through a stream cipher for
security. We call this counter-intuitive technique the {\em
encoding-encryption} paradigm, and use as motivating example the
standard for mobile telephony GSM. The enhanced security is based
on a dedicated homophonic or wire-tap channel coding that
introduces pure randomness, combined with the randomness of the
noise occurring over the communication channel. Security
evaluation regarding recovery of the secret key employed in the
keystream generator is done through an information theoretical
approach.

We show that with the aid of a dedicated wire-tap encoder, the
amount of uncertainty that the adversary must face about the
secret key given all the information he could gather during
different passive or active attacks he can mount, is a decreasing
function of the sample available for cryptanalysis. This means
that the wire-tap encoder can indeed provide an information
theoretical security level over a period of time, but after a
large enough sample is collected the function tends to zero,
entering a regime in which a computational security analysis is
needed for estimation of the resistance against the secret key
recovery.
\\
{\em Keywords}: error-correction coding, security evaluation,
stream ciphers, randomness, wireless communications, homophonic
coding, wire-tap channel coding.
\end{abstract}

%
%

\section{Introduction}

Most communication systems take into account not only the reliability
but also the security of the data they transmit. This is particularly
true in wireless environment, where the data is inherently more sensible
to security threats. Consequently, the design of such systems need to
include both coding schemes for providing error-correction and
ciphering algorithms for encryption-decryption. It is common practice to first
encrypt the data to ensure its safety, and then to encode it for reliability.
In this paper, we consider the reverse scenario, namely systems which first
encode the data, and then encrypt it, which we call the
{\em encoding-encryption paradigm}.

Though counter-intuitive at first, there are actually many real
life applications where the encoding encryption paradigm is used.
A famous illustrative example is the most widespread standard for
mobile telephony GSM, standing for ``Global System for Mobile
Communications" (see \cite{GSM-coding} and \cite{GSM-encryption},
for the coding, respectively security details). In the GSM
protocol, the data is first encoded using an error-correction code
so as to withstand reception errors, which considerably increases
the size of the message to be transmitted. The encoded data is
then encrypted to provide privacy (secrecy of the communications)
for the users.

It is interesting to mention that block ciphers are not suitable
in the context of the encoding-encryption paradigm, since the
receiver needs to first decrypt the data despite the noise, before
performing the decoding. This leads to use of stream ciphers and
thus when we refer to the security of systems using the
encoding-encryption paradigm, we implicitly mean the security of
the keystream generator and the users' secret key.

From a security perspective, there are of course pros and cons to
the encoding-encryption paradigm. Since it implies encryption of
redundant data (introduced by error-correction), it could be an
origin for mounting attacks against the employed keystream
generator. Undesirability of redundant data from a cryptographic
security point of view has indeed been already pointed out in the
seminal work by Shannon \cite{shannon}, where cryptography as a
scientific topic has been established. On the other hand, the
encoding-encryption paradigm has the advantage to offer protection
in the case of a known plaintext attacking scenario, since an
adversary can only learn a noisy version of the keystream, which
makes the cryptanalyis of the employed keystream generator more
complex.

Security evaluation can be performed under two attacking scenarios, depending
on whether one considers an active or passive adversary.

A {\em passive adversary}'s ability is limited to monitoring (and recording)
communications between the legitimate parties, so as to use the recorded data
as input for mounting a {\em known plaintext attack} against the considered
system.

Stronger attacks come from {\em active adversaries}, which can possibly
include many attacking settings. In this paper, we consider active attacks
motivated by the class of so-called Hopper and Blum (HB) authentication
protocols \cite{hopper-ASIACRYPT2001},\cite{juels-CRYPTO2005},
\cite{katz-EUROCRYPT2006},\cite{gilbert-EUROCRYPT2008},\cite{gilbert-FC2008}.
Following the original work by \cite{hopper-ASIACRYPT2001}, HB authentication
protocols are challenge-response based, where the response could be considered
as the encoded and encrypted version of the challenge, which is deliberately
degraded by random noise. A simple active attack on the improved HB$^+$
authentication protocol \cite{katz-EUROCRYPT2006} was provided in
\cite{gilbert}, where it is assumed that an adversary can manipulate
challenges sent during the authentication exchange, and thus learn whether
such manipulations give an authentication failure. The attack consists of
choosing a constant vector and using it to perturb the challenges by computing
the XOR of the selected vector with each authentication challenge vector, and
that for each of the authentication rounds. To summarize, the active attacker
has the following abilities: (i) he can modify the data in the communication
channel between the legitimate parties; and (ii) he can can learn the effect
of the performed modification at the receiving side.
This is the model that will be adopted in this work.

To evaluate the security of systems using the encoding-encryption
paradigm under threats of both passive and active adversaries as
described above, both computational and information theoretical
analyses are valid. In this paper, we focus on the latter. We
propose a security enhanced approach which employs a dedicated
coding, following the frameworks of homophonic \cite{jendal,
massey-1994, ryabko} and wire-tap channel coding
\cite{wyner,thangaraj-IEEE-IT-2007}. The improved security is a
consequence of combining the pure randomness introduced by the
wire-tap coding and the random noise which is inherent in the
communication channel.

We measure the security increase with respect to the secret key in
terms of its equivocation, that is the amount of uncertainty that
the adversary has on the key, given all the information he can
collect. A preliminary study of the security enhancement has been
provided in \cite{misa&frederique} in the case of a passive
adversary. The enhancement is based on the constructions reported
in \cite{mihaljevic-Computimg2009,mihaljevic-IOSpress2009}, and
also motivated by the fact that in the computational complexity
evaluation scenarios, this approach provides resistance against
the generic time-memory trade-off based attacking approaches
\cite{hellman,mihaljevic-IEEE-CL-2007}, and particular powerful
techniques like the correlation attacks
\cite{fossorier-IEEE-IT-2007}.

\vspace*{0.25cm} \noindent{\em Motivation for the Work}. The aim
of this work is to propose and elaborate a model for the security
evaluation of communication systems which employ the
encoding-encryption paradigm together with a dedicated wire-tap
encoder for security enhancement. In a general security evaluation
scenario, both passive and active attacks should be treated, and
while the enhanced system should be resistant to these, it should
be with a slight/moderate increase of the implementation
complexity and the communications overhead. It may be worth
emphasizing that our target is to increase the security of
existing schemes, such as GSM, which is why we have a small margin
of freedom in designing the security scheme, since we cannot touch
most of the existing components of the system.

\vspace*{0.25cm} \noindent{\em Summary of the Results}. This paper
proposes and analyzes from the information-theoretic point of view
the security of communications systems based on the
encoding-encryption paradigm under passive and active attacks,
when equipped with an additional wire-tap encoder. We show that
with the aid of a dedicated wire-tap encoder, the amount of
uncertainty that the adversary must face about the secret key
given all the information he could gather during different passive
or active attacks he can mount, is a decreasing function of the
sample available for cryptanalysis. This means that the wire-tap
encoder can indeed provide an information theoretical security
level over a period of time, but after a large enough sample is
collected the function tends to zero, entering a regime in which a
computational security analysis is needed for estimation of the
resistance against the secret key recovery.

\vspace*{0.25cm} \noindent{\em Organization of the Paper}. In
Section \ref{sec:model}, we start by describing precisely the
system model together with its security enhanced version and we
dedicate Subsection \ref{subsec:wiretap} to the design of the
wire-tap encoder. The security analysis is done in two parts:
first the passive adversary is studied in Section
\ref{sec:passive}, while the  active one is investigated in
Section \ref{sec:active}. Practical implications of the given
security analysis and some guidelines for design of security
enhanced encoding-encryption based systems are pointed out in
Section \ref{sec:appli}. Concluding remarks including some
directions for future work are given in Section
\ref{sec:conclusion}.

%
%

\section{System Model and Wiretap Coding}
\label{sec:model}

We consider a class of communication systems which, to provide
both reliability and security, employs the encoding and then encryption
paradigm, namely: the message is first encoded, and then encrypted
using a stream ciphering.

\begin{figure}[htb]
\leavevmode
\begin{center}
\includegraphics[width=0.55\hsize]{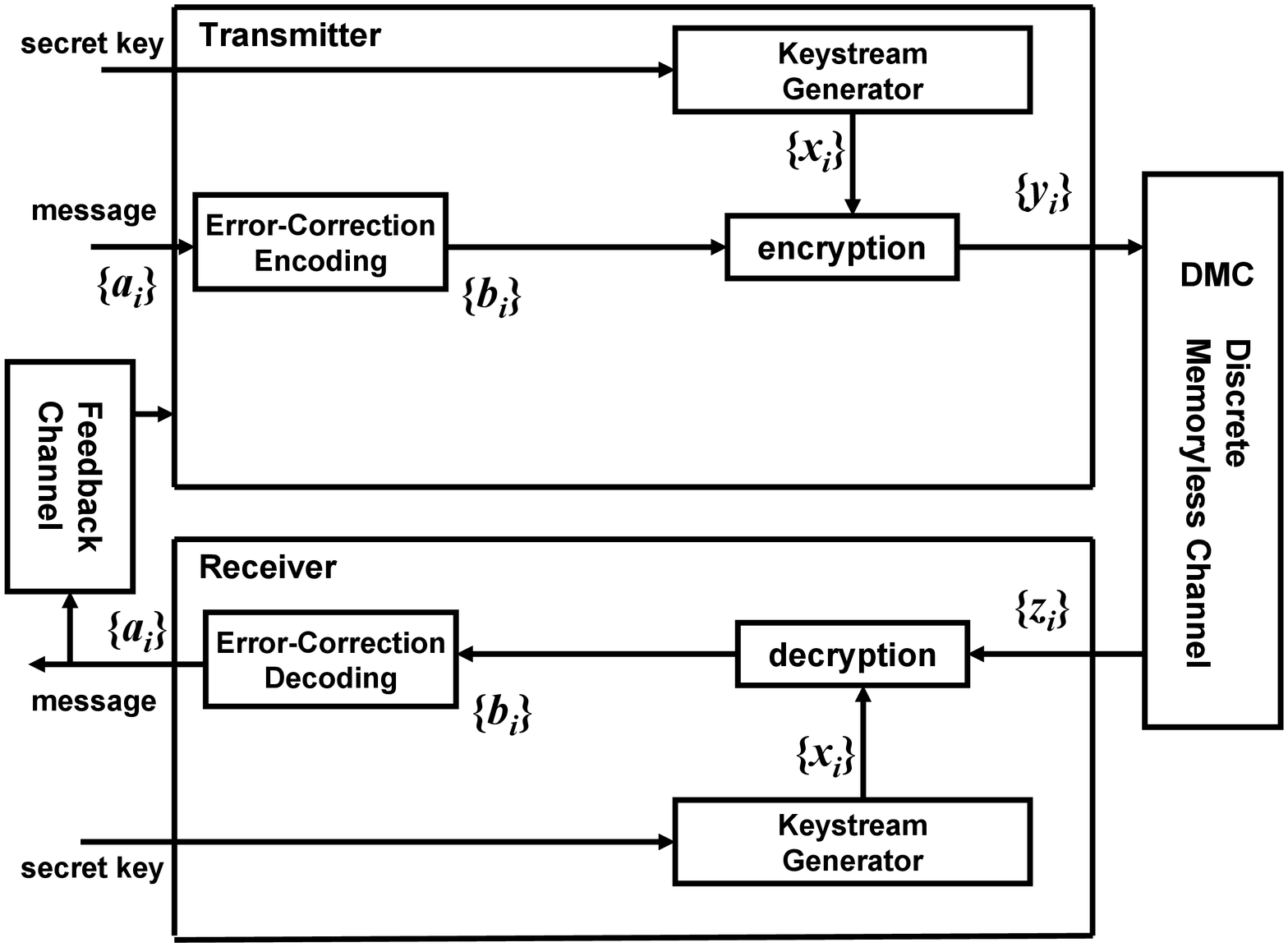}
\caption{Communication system model.}
\label{fig:figure-1}
\end{center}
\end{figure}
The detailed model is shown in Figure \ref{fig:figure-1}.
The transmitter first encodes a binary message/plain text
\[
{\bf a}=[a_i]_{i=1}^m \in \{0,1\}^m
\]
using an error-correcting code $C_{ECC}$
\[
\bv=C_{ECC}(\av)=[b_i]_{i=1}^n\in\{0,1\}^n,
\]
that maps a $m$-dimensional plain text to an $n$-dimensional
encoded message, $n>m$.
The encryption is done using a keystream generator, which takes
as input the secret key $\kv$ of the transmitter, and outputs
\[
{\bf x}=\xv(\kv)=[x_i]_{i=1}^n\in\{0,1\}^n
\]
yielding
\begin{equation}\label{eq:y}
\yv= \yv(\kv)=C_{ECC}(\av)\oplus {\bf x}=[y_i]_{i=1}^n\in\{0,1\}^n
\end{equation}
as the message to be sent over the noisy channel, where
$\oplus$ denotes XOR or modulo 2 addition. We denote the noise
vector by
\[
{\bf v}=[v_i]_{i=1}^{n}\in\{0,1\}^n
\]
where each $v_i$ is the realization of a random variable $V_i$
such that ${\rm Pr}(V_i=1) = p$ and ${\rm Pr}(V_i=0) =1-p$.
Upon reception of the corrupted encrypted binary sequence of ciphertext
\begin{eqnarray*}
\zv &=& \zv(\kv) \\
    &=& \yv + \vv \\
    &=& C_{ECC}(\av)\oplus {\bf x}\oplus \vv=[z_i]_{i=1}^n \in\{0,1\}^n,
\end{eqnarray*}
the receiver who shares the secret key $\kv$ with the transmitter can decrypt
first the message
\[
(C_{ECC}(\av)\oplus {\bf x}\oplus \vv) \oplus \xv
= C_{ECC}(\av)\oplus \vv \in\{0,1\}^n,
\]
and then decode $\av$ despite of the noise thanks to the error-correction code.
We remark that in practice a keystream generator can be considered as a finite
state machine whose initial state is determined by the secret key and some public
data. For simplicity, and because it does not affect our analysis, we can ignore
the existence of the known data, and focus on the secret key. In this setting the
output of the keystream generator is determined uniquely by the secret key, and
it is enough to assume that the transmitter and receiver only share the key.

Note further that the trick of reversing the order of encryption and error-correction
would not have been possible if a block cipher was used for encryption, since decryption
must be done before removing the channel noise.

We finally assume that there is a noiseless feedback link that connects the
receiver to the transmitter, so that the receiver can either acknowledge the
reception of the message, or inform of the decoding failure, so as to get
the missing message sent back.

\subsection{Enhanced model}

Origins for the construction given in this paper are the
approaches for stream ciphers design recently reported in
\cite{mihaljevic-Computimg2009,mihaljevic-IOSpress2009}, though
the focus of this paper is very different, since its goal is
enhancing the security of existing encryption schemes.
This difference has a number of implications regarding the
security issues and implementation complexity of the scheme.

The construction proposed in this paper employs the following main
underlying ideas for enhancing security:

\begin{itemize}

\item Involve pure randomness into the coding\&ciphering scheme
so that the decoding complexity without knowledge of the secret
key employed in the system approaches the complexity of the
exhaustive search for the secret key.

\item  Enhance security of the existing stream cipher via joint
employment of pure randomness and coding theory, and particularly
a dedicated encoding following the homophonic or wire-tap channel
encoding approaches.

\item Allow a suitable trade-off between the security and the
communications rate: Increase the security towards the limit
implied by the secret-key length at the expense of a low-moderate
decrease of the communications rate.

\end{itemize}

Regarding the homophonic and wire-tap channel coding, note the
following. The main goals of homophonic coding are to provide:
(i) multiple substitutions of a given source vector via
randomness so that the coded versions of the source vectors appear
as realizations of a random source; (ii) recoverability of the
source vector based on the given codeword without knowledge of the
randomization. The main goals of wire-tap channel coding are:
(i) amplification of the noise difference between the main and
wire-tap channel via randomness; (ii) a reliable
transmission in the main channel and at the same time to provide a
total confusion of the wire-tapper who observes the communication
in the main channel via a noisy channel (wire-tap channel).
Accordingly, homophonic coding schemes and wire-tap channel ones
have different goals and belong to different coding classes,
the source coding and the error-correction ones, but they employ
the same underlying ideas of using randomness and dedicated
coding for achieving the desired goals.

For enhancing the security we exploit the underlying approaches of
universal homophonic coding \cite{massey-1994} and generic
wire-tap coding when the main channel is error-free (see
\cite{wyner} and \cite{thangaraj-IEEE-IT-2007}, for example).
Accordingly, we may say either
``homophonic coding'' or ``wire-tap channel coding'' to address
the dedicated coding that enhances security. The
main feature of the dedicated coding is that the
encoding is based on randomness and that the
legitimate receiving party who shares a secret key with the
corresponding transmitting one can perform decoding without
knowledge of the randomness employed for the encoding. For
simplicity of the terminology we mainly (but not always) say
``wire-tap channel coding'' to describe the dedicated coding which
provides the enhanced security.

Let $C_H(\cdot)$ denote a wiretap or homophonic code encoder. To
enhance the security of the system considered, it is added at the
transmitter end (see Figure \ref{fig:figure-2}) involving a vector
of pure randomness
\[
{\bf u}=[u_i]_{i=1}^{m-\ell} \in \{0,1\}^{m-l},
\]
that is, each $u_i$ is the realization of a random variable $U_i$ with
distribution ${\rm Pr}(U_i=1) = {\rm Pr}(U_i=0)= 1/2$.
Note that $C_H(\cdot)$ is invertible.
The wiretap encoding is done prior to error-correcting encoding, thus
out of the $m$ bits of data to be sent, $m-l$ are replaced by random data,
letting actually only $l$ bits
\[
{\bf a}=[a_i]_{i=1}^l \in \{0,1\}^l
\]
of plaintext, to get as in (\ref{eq:y})
\begin{equation}\label{eq:ych}
\yv=\yv(\kv)=C_{ECC}(C_H(\av||\uv))\oplus \xv
\end{equation}
as codeword to be sent.
\begin{figure}
\leavevmode
\begin{center}
\includegraphics[width=0.55\hsize]{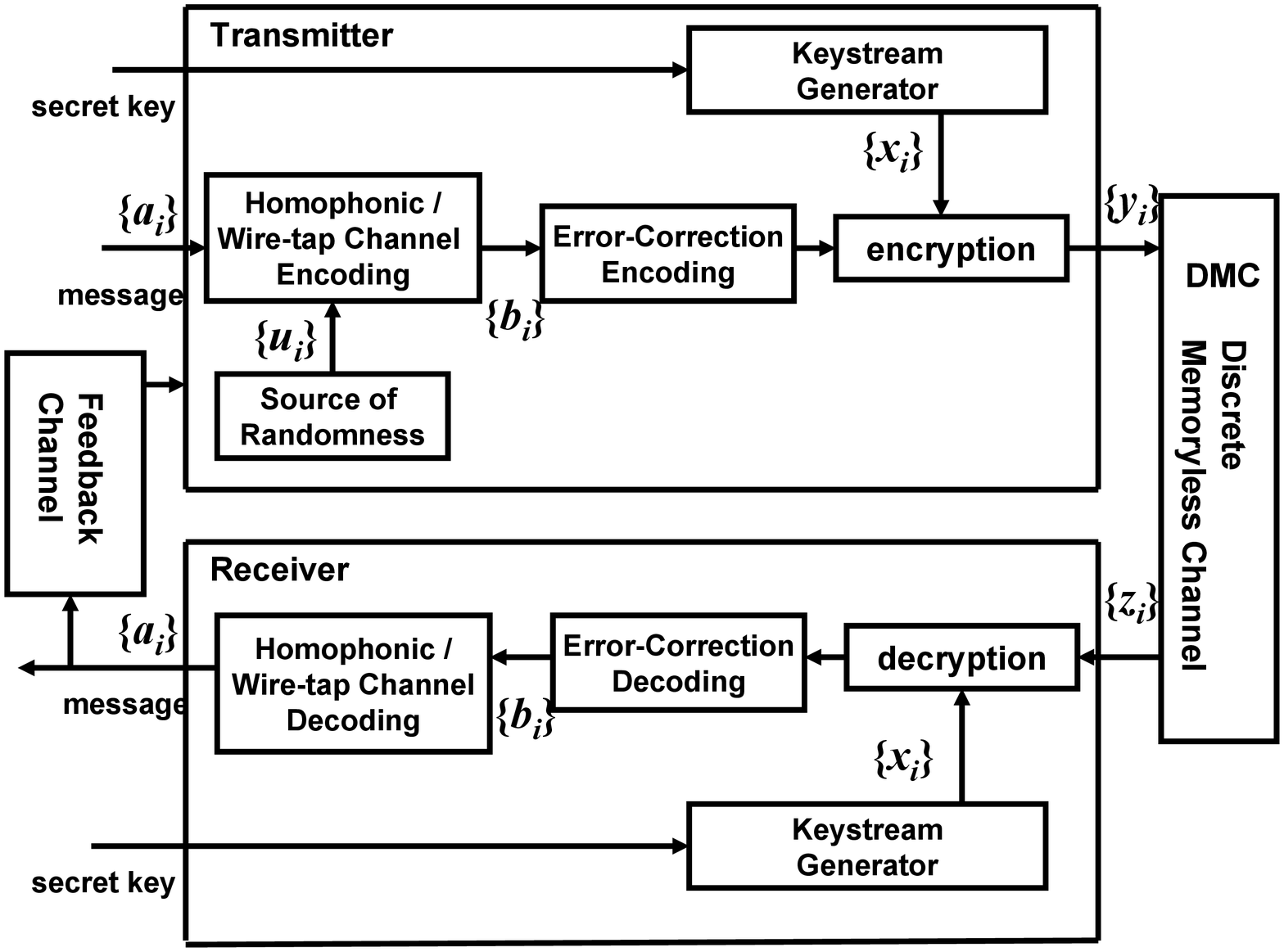}
\caption{Communication system model enhanced with a wire-tap encoder.}
\label{fig:figure-2}
\end{center}
\end{figure}
As before, the receiver obtains
\begin{equation}\label{eq:z}
\zv =\zv(\kv)=\yv \oplus \vv= C_{ECC}(C_H(\av||\uv))\oplus \xv\oplus \vv
\end{equation}
and starts with the decryption
\[
\yv=(C_{ECC}(C_H(\av||\uv))\oplus \xv \oplus \vv)\oplus \xv
=C_{ECC}(C_H(\av||\uv))\oplus \vv.
\]
He then first decodes
\[
C_H(\av||\uv).
\]
If the decoding is successful, he computes $\av$ using $C_H^{-1}$ and
let the transmitter know he could decode. Otherwise he informs the transmitter
than retransmission is required.

Similarly to a linear error-correction code where $C_{ECC}$ can be
represented by multiplying the data vector by the generator matrix of the code,
we can write $C_H$, following the so-called coset encoding proposed by Wyner
\cite{wyner}, as follows:
\begin {equation}\label{eq:coset}
C_H({\bf a}|| {\bf u})=
[{\bf a}|| {\bf u}]
\left[ \begin{array}{l}
{\bf h}_1\\
{\bf h}_2\\
\vdots \\
{\bf h}_l\\
{\bf G}^C
\end{array} \right]
=
 [{\bf a}|| {\bf u}] {\bf G}_H,
\end{equation}
where
\begin{itemize}
\item
${\bf G}^C$ is a $(m-l)\times m$ generator matrix for a $(m,m-l)$ linear
error-correction code $C$ with rows
${\bf g}^C_1,{\bf g}^C_2,\ldots,{\bf g}^C_{m-l}$,
\item
${\bf h}_1,{\bf h}_2,\ldots,{\bf h}_l$ are $l$ linearly independent row
vectors from $\{ 0,1 \}^m \backslash C$,
\item
and ${\bf G}_H$ is a $m \times m$ binary matrix corresponding to
$C_H(\cdot)$.
\end{itemize}
In words, to each $l$-bit message $\av=[a_1,\ldots,a_l]$ is associated
a coset determined by
\[
{\bf a} \mapsto a_1 {\bf h}_1 \oplus a_2 {\bf h}_2 \oplus \ldots
\oplus a_l {\bf h}_l \oplus C.
\]
Though this correspondence is deterministic, a random codeword $\cv$ is
chosen inside the coset by:
\[
{\bf c} = a_1 {\bf h}_1 \oplus a_2 {\bf h}_2 \oplus \ldots \oplus a_l
{\bf h}_l \oplus u_1{\bf g}^C_1 \oplus u_2 {\bf g}^C_2 \oplus \ldots
\oplus u_{m-l} {\bf g}^C_{m-l}
\]
where ${\bf u} = [u_1, u_2,\ldots, u_{m-l}]$ is a uniformly distributed random
$(m-l)$-bit vector.


\subsection{A Dedicated Wiretap Encoder}
\label{subsec:wiretap}

In our scenario, we need to combine wiretap encoding with error-correction
encoding, both being linear operations. Recall that
the encoded vector at the transmitter is
\[
C_{ECC}(C_H({\bf a}||{\bf u})),
\]
where $\av$ is a $l$-dimensional data vector, and $\uv$ is a $m-l$ random
vector. Using generic coset coding as discussed above with a $(m,m-l)$ code,
we now know that
\[
C_H({\bf a}||{\bf u})=[\av||\uv]{\bf G}_H,
\]
where ${\bf G}_H$ is an $m\times m$ matrix, and thus
\begin{eqnarray}
C_{ECC}(C_H({\bf a}||{\bf u}))
&=& C_{ECC}([{\bf a}||{\bf u}] {\bf G}_H) \nonumber \\
&=&[{\bf a}||{\bf u}] {\bf G}_H{\bf G}_{ECC} \nonumber \\
&=& [{\bf a}||{\bf u}] {\bf G} \label{eq:zfinal}
\end{eqnarray}
where  ${\bf G}_{ECC}$ is an $m \times n$ binary generator matrix
corresponding to $C_{ECC}(\cdot)$, and ${\bf G} ={\bf G}_H{\bf G}_{ECC}$
is an $m \times n$ binary matrix summarizing the two successive encodings
at the transmitter.

Since ${\bf G}_H$ multiplies the vector $[\av||\uv]$ where $\av$ is an
$l$-dimension vector and $\uv$ an $(m-l)$ dimension vector, it makes
sense to write the $m\times m$ matrix ${\bf G}_H$ by blocks of size
depending on $l$ and $m-l$:
\begin{equation}\label{RECURRMATR}
{\bf G}_H =
\left[
\begin{array}{cc}
{\bf G}_H^{(1)} & {\bf G}_H^{(2)} \\
{\bf I}_{m-l} & {\bf G}_H^{(4)}
\end{array}
\right]
\end{equation}
where ${\bf G}_H^{(1)}$ is an $l\times (m-l)$ matrix,
${\bf G}_H^{(2)}$ is an $l\times l$ matrix, ${\bf I}_{m-l}$ denotes the
$(m-l)\times (m-l)$ identity matrix, and finally
${\bf G}_H^{(4)}$ is an $(m-l)\times l$ matrix.

Requirements on the matrix ${\bf G}_H$ are:
\begin{enumerate}
\item
{\bf Invertibility.}
The matrix ${\bf G}_H$ should be an invertible matrix, so that the
receiver can decode the wiretap encoding.
\item
{\bf Security.}
The matrix ${\bf G}_H$ should map $[\av||\uv]$ so that in the resulting
vector each bit of data from $\av$ is affected by at least one random
bit from $\uv$, to make sure that each bit of data is protected.
\item
{\bf Sparsity.}
Both the matrices ${\bf G}_H$ and ${\bf G}^{-1}_H$ should be
as sparse as possible, in order to avoid too much computation and
communication overheads.
\end{enumerate}

Since by (\ref{eq:coset}), the $m-l$ last rows of ${\bf G}_H$ form a generator
matrix of a $(m,m-l)$ error correction code $C$ in systematic form, it has rank
$m-l$. The first $l$ rows are then obtained by adding linearly independent vectors
not in $C$, thus completing a basis of $\{0,1\}^m$, resulting automatically in an
invertible matrix. A simple way to do so is to choose
${\bf G}_H^{(1)}={\bf 0}_{l\times(m-l)}$ and ${\bf G}_H^{(2)}={\bf I}_l$, so that
(\ref{RECURRMATR}) becomes
\[
{\bf G}_H =
\left[
\begin{array}{cc}
{\bf 0}_{l\times(m-l)} & {\bf I}_l \\
{\bf I}_{m-l} & {\bf G}_H^{(4)}
\end{array}
\right].
\]

Since
\[
[\av||\uv]
\left[
\begin{array}{cc}
{\bf 0}_{l\times(m-l)} & {\bf I}_l \\
{\bf I}_{m-l} & {\bf G}_H^{(4)}
\end{array}
\right]=[\uv,\av+\uv{\bf G}_H^{(4)}],
\]
and ${\bf G}_H^{(4)}$ has no column with only zeroes (it is a block of an error correction code),
we have that indeed each bit of data from $\av$ is affected by at least one random
bit from $\uv$.

The choice of ${\bf G}_H^{(1)}={\bf 0}_{l\times(m-l)}$ and ${\bf G}_H^{(2)}={\bf I}_l$ makes
the $l$ first rows of ${\bf G}_H$ as sparse as possible.

\begin{example}\rm
Take $m=4$, $l=2$ so that $m-l=2$, and
\[
{\bf G}_H =
\left[
\begin{array}{cc}
{\bf G}_H^{(1)} & {\bf G}_H^{(2)} \\
{\bf I}_2 & {\bf G}_H^{(4)}
\end{array}
\right] =
\left[
\begin{array}{cccc}
0 & 0 & 1 & 0 \\
0 & 0 & 0 & 1 \\
1 & 0 & 1 & 0 \\
0 & 1 & 0 & 1
\end{array}
\right].
\]
Clearly $\gm_H$ is invertible. The error correction code described by rows 3 and 4
is simply the repetition code.
\end{example}

%
%

\section{Security against a Passive Adversary}
\label{sec:passive}

This section analyzes the security of the proposed scheme against
a passive adversary, that is an adversary limited to monitoring
and recording communications. The system we consider already uses
a keystream generator to protect the confidentiality of the data.
Thus though a passive adversary may try to still discover
confidential messages, more dangerous is an attack against the
secret key, which would endanger all the transmissions. Based on
what a passive adversary can do, this means mounting a known
plaintext attack in order to recover the secret key. In the
passive known plaintext attacking scenario, with no enhanced
security, the adversary possesses the pair
\[
(\mbox{plaintext, noisy ciphertext}) = (\av,\zv= C_{ECC}(\av)\oplus\xv\oplus\vv),
\]
from which he calculates
\[
C_{ECC}(\av)\oplus\zv=\xv\oplus\vv.
\]
He can then use $\xv\oplus\vv$ for further processing in an attempt
to recover the key which generated $\xv$.
We will show how the introduction of the wiretap encoding increases the
protection of the key against such attacks.

In what follows, we use as notation that
\begin{itemize}
\item
$u_i$, random bits used in the wiretap encoder,
\item
$x_i$, output bits of the keystream generator,
\item
$v_i$, random components of the additive noise
\end{itemize}
are realizations of certain random variables $U_i$, $X_i$ and $V_i$,
respectively, $i=1,2,...,n$. We can further assume that the plaintext
is generated randomly, and thus see $a_i$ as a realization of a random
variable $A_i$ as well. The corresponding vectors of random variables are
denoted as follows: ${\bf A}^l = [A_i]_{i=1}^{\ell}$,
${\bf U}^{m-l}= [U_i]_{i=1}^{m-\ell}$, ${\bf X}^n= [X_i]_{i=1}^n$, and
${\bf V}^n=[V_i]_{i=1}^n$.

Recall from (\ref{eq:z}) and (\ref{eq:zfinal}) that the received vector at the
receiver is given by
\begin{eqnarray*}
{\bf z} &=&  C_{ECC}(C_H({\bf a}||{\bf u}))\oplus {\bf x}\oplus {\bf v}\\
        &=&  [{\bf a}||{\bf u}] {\bf G}\oplus {\bf x}\oplus {\bf v}
\end{eqnarray*}
where ${\bf G} = [g_{i,j}]_{i=1}^{m} \:_{j=1}^n$ is an $m\times n$ matrix
containing both the wiretap and the error correction encoding.

Let ${\bf z} = [z_i]_{i=1}^n$, so that ${\bf z}$ can be written componentwise as
\[
z_i=((\bigoplus_{k=1}^{\ell} g_{k,i} a_k )\oplus
(\bigoplus_{k=1}^{m-\ell} g_{\ell+k,i} u_k ) \oplus x_i )\oplus v_i,~i=1,2,...,n,
\]
and $z_i$ appears as the realization of a random variable $Z_i$:
\[
Z_i = (( \bigoplus_{k=1}^{\ell} g_{k,i} A_k ) \oplus
(\bigoplus_{k=1}^{m-\ell} g_{\ell+k,i} U_k ) \oplus X_i )\oplus V_i,~ i=1,2,...,n.
\]
We further denote ${\bf Z}^n = [Z_i]_{i=1}^n$, and
\begin{equation}\label{eq:Z}
{\bf Z}^n = C_{ECC}(C_H({\bf A}^l||{\bf U}^{m-l}))\oplus{\bf X}^n
\oplus{\bf V}^n.
\end{equation}
From (\ref{RECURRMATR}), we have
\begin{eqnarray*}
C_H(\am^l||\um^{m-l})
&=&[\am^l,\um^{m-l}]{\bf G}_H\\
&=&[\am^l,\um^{m-l}]
\left[
\begin{array}{cc}
{\bf G}_H^{(1)} & {\bf G}_H^{(2)} \\
{\bf I}_{m-l} & {\bf G}_H^{(4)}
\end{array}
\right]\\
&=&
[\am^l{\bf G}_H^{(1)}, \am^l{\bf G}_H^{(2)}]+
[\um^{m-l},\um^{m-l}{\bf G}_H^{(4)}],
\end{eqnarray*}
and we can rewrite the wiretap encoder as
\[
C_H(\am^l||\um^{m-l})=C_{H,a}(\am^l)\oplus C_{H,u}(\um^{m-l}),
\]
where $C_{H,a}$ and $C_{H,u}$ are the operators for the wiretap encoding
restricted to $\av$, resp. $\uv$:
\[
C_{H,a}(\am^l)=[\am^l{\bf G}_H^{(1)}, \am^l{\bf G}_H^{(2)}],~
C_{H,u}(\um^{m-l})=[\um^{m-l},\um^{m-l}{\bf G}_H^{(4)}].
\]
Since the error correcting encoding is also linear, we finally get
\begin{equation}\label{eq:Z2}
{\bf Z}^n = C_{ECC}(C_{H,a}({\bf A}^l))\oplus C_{ECC}(C_{H,u}({\bf U}^{m-l}))
\oplus{\bf X}^n \oplus {\bf V}^n.
\end{equation}

The lemma below gives a bound on the resistance of the scheme to a known
plain text attack where the adversary knows the pair $({\bf a},{\bf z})$.
\begin{lemma}\label{lem:bound}
The equivocation of the keystream output knowing the plaintext and
the received signal can be lower bounded as follows:
\begin{eqnarray*}
&& H(\xm^n|\am^l,\zm^n)\geq \\
& & \min\{H(\um^{m-l}), H(\xm^n)+H(\vm^n)\} +\\
& & \min\{H(\vm^n),H(\xm^n) \}- \delta(C_{ECC}),
\end{eqnarray*}
where
\begin{eqnarray*}
\delta(C_{ECC}) & = & H(\epsilon)+\epsilon \log(2^{m-l}-1) \\
                &\rightarrow & 0
\end{eqnarray*}
since $\epsilon\rightarrow 0$.
\end{lemma}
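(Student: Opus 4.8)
The plan is to peel off, one source at a time, the three pieces of randomness that the adversary's observation entangles with the keystream: the plaintext $\am^l$ (which is observed, hence costs nothing), the wiretap randomness $\um^{m-l}$, and the channel noise $\vm^n$; the error-correcting code will enter only through a Fano correction.

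First I would reduce to a cleaner quantity. Set $\mathbf{W}^n := \zm^n \oplus C_{ECC}(C_{H,a}(\am^l))$; conditioning on $\am^l$, the pairs $(\am^l,\zm^n)$ and $(\am^l,\mathbf{W}^n)$ are informationally equivalent, and by the linearity exploited in (\ref{eq:Z2}) one has $\mathbf{W}^n = C_{ECC}(C_{H,u}(\um^{m-l})) \oplus \xm^n \oplus \vm^n$, which does not involve $\am^l$ and is independent of it (the keystream does not depend on the plaintext). Hence $H(\xm^n \mid \am^l, \zm^n) = H(\xm^n \mid \mathbf{W}^n)$. I would then set $\mathbf{T}^n := C_{ECC}(C_{H,u}(\um^{m-l}))$; since $C_{H,u}$ and $C_{ECC}$ are injective, $\mathbf{T}^n$ is a bijective image of $\um^{m-l}$, so $H(\mathbf{T}^n) = H(\um^{m-l})$, and $\mathbf{W}^n = \mathbf{T}^n \oplus \xm^n \oplus \vm^n$ with $\mathbf{T}^n$, $\xm^n$, $\vm^n$ mutually independent.

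Next I would introduce $\um^{m-l}$ by the chain rule,
\[
H(\xm^n \mid \mathbf{W}^n) = H(\um^{m-l} \mid \mathbf{W}^n) + H(\xm^n \mid \mathbf{W}^n, \um^{m-l}) - H(\um^{m-l} \mid \mathbf{W}^n, \xm^n),
\]
and bound the three terms. For the negative term: given $(\mathbf{W}^n, \xm^n)$ one forms $\mathbf{W}^n \oplus \xm^n = C_{ECC}(C_{H,u}(\um^{m-l})) \oplus \vm^n$; since $C_{ECC} \circ C_{H,u}$ is a linear subcode of $C_{ECC}$, with decoding radius at least as large, a decoder returns $\um^{m-l}$ --- one of $2^{m-l}$ messages --- with block error probability at most the error probability $\epsilon$ of $C_{ECC}$, so Fano's inequality gives $H(\um^{m-l} \mid \mathbf{W}^n, \xm^n) \le H(\epsilon) + \epsilon \log(2^{m-l}-1) = \delta(C_{ECC})$, which tends to $0$ as $\epsilon \to 0$. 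For the middle term: conditioning on $\um^{m-l}$ (which is independent of $(\xm^n,\vm^n)$) makes $\mathbf{W}^n \oplus \mathbf{T}^n = \xm^n \oplus \vm^n$ available, so $H(\xm^n \mid \mathbf{W}^n, \um^{m-l}) = H(\xm^n \mid \xm^n \oplus \vm^n) \ge \min\{H(\xm^n), H(\vm^n)\}$. For the first term: writing $\mathbf{W}^n = \mathbf{T}^n \oplus (\xm^n \oplus \vm^n)$ with $\mathbf{T}^n$ independent of $(\xm^n, \vm^n)$, and using that the two independent masks $\xm^n$ and $\vm^n$ jointly carry $H(\xm^n) + H(\vm^n)$ bits of confusion, the same type of estimate gives $H(\um^{m-l} \mid \mathbf{W}^n) = H(\mathbf{T}^n \mid \mathbf{W}^n) \ge \min\{H(\um^{m-l}), H(\xm^n) + H(\vm^n)\}$. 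Adding the three bounds yields the claimed inequality.

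The step I expect to be the main obstacle is the pair of ``masking'' estimates of the form $H(A \mid A \oplus B) \ge \min\{H(A), H(B)\}$ for independent $A, B$ (and the refinement, with $B = \xm^n \oplus \vm^n$, in which one wants $H(\xm^n) + H(\vm^n)$ and not the possibly smaller $H(\xm^n \oplus \vm^n)$ to appear). Since $H(A \mid A \oplus B) = H(A) + H(B) - H(A \oplus B)$, the real content is an upper bound on $H(A \oplus B)$, and here one must use the structure of the scheme --- the memoryless product form of the noise $\vm^n$, and, crucially, the fact that the keystream output $\xm^n$ behaves as an additive, near-uniform mask, so that XOR-ing it in cannot inflate the entropy past the larger operand; the same observation is what lets the two independent masks be charged separately in the first term. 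The remaining ingredients --- the reduction of the first paragraph, and the verification that $C_{ECC} \circ C_{H,u}$ inherits the decoding performance of $C_{ECC}$ so that the same $\epsilon$, hence the same $\delta(C_{ECC}) \to 0$, applies --- are routine, and I would only state them briefly.
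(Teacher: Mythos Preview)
Your decomposition is exactly the paper's. After your preliminary reduction (the paper simply leaves $\am^l$ in the conditioning rather than absorbing it into a new variable $\mathbf{W}^n$), both arguments arrive at the same three-term identity
\[
H(\xm^n\mid\am^l,\zm^n)=H(\um^{m-l}\mid\am^l,\zm^n)+H(\vm^n\mid\am^l,\um^{m-l},\zm^n)-H(\um^{m-l}\mid\am^l,\xm^n,\zm^n),
\]
which the paper obtains by equating two chain-rule expansions of $H(\am^l,\um^{m-l},\xm^n,\vm^n,\zm^n)$ and you obtain directly from $H(\um^{m-l},\xm^n\mid\mathbf{W}^n)$ expanded two ways; the Fano bound on the negative term is identical. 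For the step you single out as the obstacle, the paper's treatment is brief: from the bijection $C_{ECC}(C_{H,u}(\um^{m-l}))=(\text{known})\oplus\xm^n\oplus\vm^n$ it writes $H(\um^{m-l}\mid\am^l,\zm^n)=H(\xm^n\oplus\vm^n)=H(\xm^n)+H(\vm^n)$, pairs this with the trivial bound $H(\um^{m-l}\mid\am^l,\zm^n)\le H(\um^{m-l})$, and records the result as the minimum of the two (and similarly for the middle term). No structural hypothesis on $\xm^n$ beyond independence from $\vm^n$ is invoked, so your worry that one must appeal to near-uniformity of the keystream to control $H(A\oplus B)$ is not something the paper addresses; it simply asserts the equalities and moves on.
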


\begin{proof}
Employing the entropy chain rule, we have that
\begin{eqnarray*}
& & H(\am^l,\um^{m-l},\xm^n,\vm^n,\zm^n) \\
&=& H(\am^l) + H(\zm^n|\am^l) + H(\um^{m-l}|\am^l,\zm^n)+\\
& & H(\vm^n|\am^l,\um^{m-l},\zm^n) + H(\xm^n|\am^l,\um^{m-l},\vm^n,\zm^n)\\
&=& H(\am^l) + H(\zm^n|\am^l) + H(\um^{m-l}|\am^l,\zm^n)\\
& &  + H(\vm^n|\am^l,\um^{m-l},\zm^n),
\end{eqnarray*}
since $H(\xm^n|\am^l,\um^{m-l},\vm^n,\zm^n)=0$, using that
${\bf X}^n = C_{ECC}(C_H({\bf A}^l||{\bf U}^{m-l}))\oplus{\bf Z}^n \oplus{\bf V}^n$
from (\ref{eq:Z}).

Repeating the entropy chain rule but with another decomposition,
we further get that
\begin{eqnarray*}
& & H(\am^l,\um^{m-l},\xm^n,\vm^n,\zm^n) \\
&=& H(\am^l) + H(\zm^n|\am^l) + H(\xm^n|\am^l,\zm^n)+\\
& & H(\um^{m-l}|\am^l,\xm^n,\zm^n) + H(\vm^n|\am^l,\um^{m-l},\xm^n,\zm^n)\\
&=& H(\am^l) + H(\zm^n|\am^l) + H(\xm^n|\am^l,\zm^n)\\
& &  + H(\um^{m-l}|\am^l,\xm^n,\zm^n),
\end{eqnarray*}
noticing that $H(\vm^n|\am^l,\um^{m-l},\xm^n,\zm^n)=0$ using again
${\bf V}^n = C_{ECC}(C_H({\bf A}^l||{\bf U}^{m-l}))\oplus{\bf Z}^n \oplus{\bf X}^n$
from (\ref{eq:Z}).

By combining the two decompositions, we deduce that
\begin{eqnarray*}
&& H(\xm^n|\am^l,\zm^n)\\
&=& H(\um^{m-l}|\am^l,\zm^n) + H(\vm^n|\am^l,\um^{m-l},\zm^n) \\
&&  - H(\um^{m-l}|\am^l,\xm^n,\zm^n).
\end{eqnarray*}

We now reformulate $H(\um^{m-l}|\am^l,\zm^n)$ and
$H(\vm^n|\am^l,\um^{m-l},\zm^n)$.
First, using this time (\ref{eq:Z2}), we have that
$C_{ECC}(C_{H,u}({\bf U}^{m-l})) = C_{ECC}(C_{H,a}({\bf A}^l))\oplus {\bf Z}^n
\oplus{\bf X}^n \oplus {\bf V}^n$. Since $C_{ECC}$ and $C_H$ are invertible, note
that $H(C_{ECC}(C_{H,u}({\bf U}^{m-l})))=H({\bf U}^{m-l})$, so that
\[
H(\um^{m-l}|\am^l,\zm^n)=H({\bf X}^n \oplus {\bf V}^n ).
\]
On the other hand, conditioning reduces entropy, namely,
\[
H(\um^{m-l}|\am^l,\zm^n)\leq H(\um^{m-l}),
\]
and in order to make explicit the role of the extra randomness brought by the wiretap
encoder, we can write that
\begin{eqnarray*}
H(\um^{m-l}|\am^l,\zm^n) &=& \min \{H(\um^{m-l}), H(\xm^n\oplus\vm^n) \}\\
                        &=& \min \{H(\um^{m-l}), H(\xm^n) + H(\vm^n) \}
\end{eqnarray*}
since ${\bf X}^n$ and ${\bf V}^n$ are mutually independent.

Similarly, again using (\ref{eq:Z2}) to get that ${\bf V}^n= C_{ECC}(C_{H,u}({\bf U}^{m-l}))
\oplus C_{ECC}(C_{H,a}({\bf A}^l))\oplus {\bf Z}^n\oplus{\bf X}^n$ and combining with
\[
H(\vm^n|\am^l,\um^{m-l},\zm^n)\leq H(\vm^n),
\]
we obtain that
\[
H(\vm^n|\am^l,\um^{m-l},\zm^n) = \min \{H(\vm^n), H(\xm^n) \},
\]
which distinguishes the randomness coming from the channel noise
and the keystream entrpy.

We are finally left with bounding $H(\um^{m-l}|\am^l,\xm^n,\zm^n)$.
Recovering ${\bf U}^{m-l}$ when ${\bf A}^l$, ${\bf X}^n$ and
${\bf Z}^n$ are given is the decoding problem of removing the noise
${\bf V}^n$ employing the code $C_{ECC}$ with error probability $P_e$.
This can be bounded using Fano's inequality:
\begin{eqnarray*}
H(\um^{m-l}|\am^l,\xm^n,\zm^n)
&\leq &  H(P_e) + P_e \log(2^{m-l}-1)\\
&\leq &  H(\epsilon)+\epsilon \log(2^{m-l}-1) \rightarrow 0
\end{eqnarray*}
since by design of the system, we may assume $P_e=\epsilon\rightarrow 0$.
This concludes the proof.
\end{proof}

The interpretation of the lemma is a bound on the resistance of
the scheme to a passive known plain text attack. This clearly
depends on two parameters:
\begin{itemize}
\item
the keystream generator: if the output of the keystream generator
has a very high entropy $H(\xm^n)\geq H(\um^{m-l},\vm^n)=H(\um^{m-l})+H(\vm^n)$, then
the lemma tells that
\[
H(\xm^n|\am^l,\zm^n)\geq H(\um^{m-l})+H(\vm^n)-\delta(C_{ECC}).
\]
\item
the pure randomness put in the wiretap encoder: if we do not add
it in the system, the lemma shows that
\[
H(\xm^n|\am^l,\zm^n)\geq H(\vm^n)
\]
that is the information-theoretic security of the keystream
depends on the channel noise.
\end{itemize}
We illustrate this last claim with an example.
\begin{example}\label{ex:kpa}\rm
Consider the case of a known plaintext attack when ${\bf a} = {\bf 0}$.
We then have
\[
z_i= x_i\oplus (\bigoplus_{k=1}^{m-\ell} g_{\ell+k,i} u_k )\oplus v_i,
~i=1,2,...,n.
\]
Without the wiretap encoding, the keystream $x_i$ is corrupted and
so protected as well by the noise on the channel, while with
addition of the wiretap encoder, it is further protected by the
pure randomness added.
\end{example}

The special case where the channel is noisefree is detailed in the
corollary below. This further illustrates the effect of pure randomness
involved in the wire-tap channel coding.
\begin{cor}
In a noisefree channel, we have
\[
H(\xm^n|\am^l,\zm^n) \geq   \min\{H(\um^{m-l}), H(\xm^n)\}.
\]
\end{cor}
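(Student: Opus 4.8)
The plan is to obtain the corollary as the noisefree specialization of Lemma~\ref{lem:bound}, with the single additional observation that when there is no channel noise the Fano term in the lemma's proof vanishes \emph{exactly}, not merely asymptotically, so the correction $\delta(C_{ECC})$ disappears from the statement. Concretely, a noisefree channel means $p=0$, hence $V_i=0$ deterministically for every $i$ and $H(\vm^n)=0$. First I would reuse the combined entropy decomposition established midway through the proof of Lemma~\ref{lem:bound},
\[
H(\xm^n|\am^l,\zm^n)=H(\um^{m-l}|\am^l,\zm^n)+H(\vm^n|\am^l,\um^{m-l},\zm^n)-H(\um^{m-l}|\am^l,\xm^n,\zm^n),
\]
which holds verbatim here since its derivation used only the linearity and invertibility of $C_{ECC}$ and $C_H$ together with (\ref{eq:Z}), and no property of $\vm^n$.

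Then I would evaluate the three terms. The middle term satisfies $0\le H(\vm^n|\am^l,\um^{m-l},\zm^n)\le H(\vm^n)=0$, so it is zero (equivalently, its value $\min\{H(\vm^n),H(\xm^n)\}$ from the lemma is $0$). For the first term, the same argument as in the lemma gives $H(\um^{m-l}|\am^l,\zm^n)=H(\xm^n\oplus\vm^n)=H(\xm^n)$, using invertibility of $C_{ECC}$ and $C_H$ and $\vm^n=0$; combining with $H(\um^{m-l}|\am^l,\zm^n)\le H(\um^{m-l})$ yields $H(\um^{m-l}|\am^l,\zm^n)=\min\{H(\um^{m-l}),H(\xm^n)\}$. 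For the last term, with no noise $\zm^n=C_{ECC}(C_H(\am^l||\um^{m-l}))\oplus\xm^n$, so given $\am^l,\xm^n,\zm^n$ one computes $C_{ECC}(C_H(\am^l||\um^{m-l}))=\zm^n\oplus\xm^n$ and inverts $C_{ECC}$ and $C_H$ to recover $\um^{m-l}$ exactly; hence $H(\um^{m-l}|\am^l,\xm^n,\zm^n)=0$ (equivalently, Fano applies with $P_e=0$, so $\delta(C_{ECC})=H(0)+0\cdot\log(2^{m-l}-1)=0$). Substituting the three values gives $H(\xm^n|\am^l,\zm^n)=\min\{H(\um^{m-l}),H(\xm^n)\}$, which is in fact an equality, hence a fortiori the claimed bound.

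There is essentially no hard step: the only point needing care is recognizing that the decoding subproblem becomes trivial when $\vm^n=0$, so the asymptotic $\delta(C_{ECC})$ term of the lemma collapses to $0$. If one prefers not to re-examine the decomposition, a one-line alternative is to substitute $H(\vm^n)=0$ directly into the bound of Lemma~\ref{lem:bound}: the second minimum $\min\{H(\vm^n),H(\xm^n)\}$ becomes $0$, the first becomes $\min\{H(\um^{m-l}),H(\xm^n)\}$, and $\delta(C_{ECC})\to 0$, which recovers the stated inequality immediately.
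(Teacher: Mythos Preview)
Your proposal is correct and essentially matches the paper's own proof: the paper simply substitutes $H(\vm^n)=0$ and $P_e=0$ (hence $\delta(C_{ECC})=0$) into the bound of Lemma~\ref{lem:bound}, exactly your ``one-line alternative.'' Your more detailed first pass through the decomposition only makes explicit what the paper leaves implicit, so there is no substantive difference.
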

\begin{proof}
Since the channel is noisefree, ${\bf V} = 0$ and consequently $H(V)=P_e =0$.
Lemma \ref{lem:bound} can be rewritten as
\begin{eqnarray*}
& & H(\xm^n|\am^l,\zm^n) \\
&\geq & \min\{H(\um^{m-l}), H(\xm^n)\} +\\
&     & \min\{0,H(\xm^n) \}- \delta(C_{ECC})\\
& = & \min\{H(\um^{m-l}), H(\xm^n)\}.
\end{eqnarray*}
\end{proof}

So far, we have discussed the security of a given keystream
generator output, for one instance of transmission. We now move to
a more realistic scenario. Transmission takes place over time
$t=1,2,\ldots$, and the keystream generator uses a secret key (or
just a key) $\km$ based on which it computes its outputs
$\xm^{(t)}=[X_i^{(t)}]_{i=1}^n$ in a deterministic way depending
on $f$ for a time period of length $\tau$:
\[
\xm^{(t)}=\xm^{(t)}(\km)=f^{(t)}(\km),~t=1,\ldots,\tau.
\]
Note that $f^{(t)}(\km)$ is an expansion of the secret key $\km$ via a finite
state machine and can be considered as an encoding of $|\km|$ bits into a long
binary codeword.
Correspondingly, we can rewrite the whole system in terms of realizations
of random variables that depends on time, over the time interval
$t=1,\ldots\tau$:
\begin{itemize}
\item
 ${\bf A}^{(t)} = [A_i^{(t)}]_{i=1}^{\ell}$ for the plain text,
\item
${\bf U}^{(t)} =[U_i^{(t)}]_{i=1}^{m-\ell}$ for the pure randomness used
in the wiretap encoder,
\item
${\bf V}^{(t)}=[V_i^{(t)}]_{i=1}^n$ for the channel noise,
\item
${\bf Z}^{(t)} = [Z_i^{(t)}]_{i=1}^n$ for the received signal.
\end{itemize}
Similarly as above, we have
\[
{\bf Z}^{(t)} = C_{ECC} (C_{H,a}({\bf A}^{(t)}) \oplus
C_{H,u}({\bf U}^{(t)})) \oplus f^{(t)}({\bf K}) \oplus {\bf
V}^{(t)}.
\]
The key ${\bf K}$ is represented as a vector of random variables drawn
independently from a uniform distribution over $\{0,1\}$, so that
$H(\km) = |{\bf K}|$.
We further use the following block notations:
\begin{eqnarray*}
{\bf A}^{\tau l}& = &[{\bf A}^{(1)}|| {\bf A}^{(2)}||\ldots ||{\bf A}^{(\tau)}]\\
{\bf U}^{\tau (m-l)}&=&[{\bf U}^{(1)}||{\bf U}^{(2)}||\ldots||{\bf U}^{(\tau)}]\\
{\bf V}^{\tau n}& = &[{\bf V}^{(1)}|| {\bf V}^{(2)}||\ldots|| {\bf V}^{(\tau)}]\\
{\bf Z}^{\tau n}& = & [{\bf Z}^{(1)}|| {\bf Z}^{(2)}|| ... || {\bf Z}^{(\tau)}].
\end{eqnarray*}

We can now state the main theorem of this section, which describes
the security of the enhanced system against a passive adversary
regarding the secret key recovery.
\begin{theor}\label{theor:pass}
When ${\rm Pr}(V_i^{(j)}=0) \neq {\rm Pr}(V_i^{(j)}=1) \neq 1/2$,
$i=1,2,...,n$, $j=1,2,...,\tau$, there exists a threshold $\tau_{thres}$
such that
\[
H(\km|\am^{\tau l},\zm^{\tau n}) \;\;
\left\{ \begin{tabular}{lll}
$> 0$ & {\rm for} & $\tau < \tau_{thres}$ \\
$\rightarrow 0$   & {\rm for} & $\tau \geq \tau_{thres} \;.$\\
\end{tabular}\right.
\]
\end{theor}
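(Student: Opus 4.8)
The plan is to establish the two halves of the dichotomy separately and then glue them with a monotonicity observation. Since conditioning on the extra block $(\am^{(\tau+1)},\zm^{(\tau+1)})$ cannot increase entropy, $H(\km\mid \am^{\tau l},\zm^{\tau n})$ is non-increasing in $\tau$, hence converges to a limit $L\ge 0$; the threshold $\tau_{thres}$ will simply mark the passage from the ``still strictly positive'' regime to the decaying tail. Before either half I would normalise the adversary's view: knowing $\am^{\tau l}$, he may cancel $C_{ECC}(C_{H,a}(\am^{(t)}))$ from each $\zm^{(t)}$ and is thus equivalently handed $f^{(t)}(\km)\oplus \mathbf{W}^{(t)}$, $t=1,\dots,\tau$, with $\mathbf{W}^{(t)}:=C_{ECC}(C_{H,u}(\um^{(t)}))\oplus\vm^{(t)}$, and the $\mathbf{W}^{(t)}$ are i.i.d.\ across $t$ (because the $\um^{(t)}$ and $\vm^{(t)}$ are). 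Let $S\subseteq\{0,1\}^n$ be the image of $C_{ECC}\circ C_{H,u}$; as $C_{H,u}$ has rank $m-l$ and $C_{ECC}$ is injective, $\dim S=m-l<n$. A short Fourier computation using $\mathbb{E}\,[(-1)^{\langle\vm^n,\chi\rangle}]=(1-2p)^{w(\chi)}$ (with $w(\cdot)$ the Hamming weight) together with $S^{\perp}\neq\{\mathbf 0\}$ shows that, precisely because $p\neq 1/2$, the law of $\mathbf{W}^n:=C_{ECC}(C_{H,u}(\um^{m-l}))\oplus\vm^n$ is non-uniform on $\{0,1\}^n$, so that $\beta:=H(\mathbf{W}^n)<n$.

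For the strictly-positive regime I would write $H(\km\mid\am^{\tau l},\zm^{\tau n})=H(\km)-I(\km;\zm^{\tau n}\mid\am^{\tau l})$ and expand $I(\km;\zm^{\tau n}\mid\am^{\tau l})=H(\zm^{\tau n}\mid\am^{\tau l})-H(\zm^{\tau n}\mid\am^{\tau l},\km)$. Conditioned on $\am^{\tau l}$ and $\km$, each $\zm^{(t)}$ is a known offset plus the independent $\mathbf{W}^{(t)}$, so $H(\zm^{\tau n}\mid\am^{\tau l},\km)=\tau\beta$, while trivially $H(\zm^{\tau n}\mid\am^{\tau l})\le\tau n$. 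Hence $I(\km;\zm^{\tau n}\mid\am^{\tau l})\le\min\{\tau(n-\beta),H(\km)\}$ and
\[
H(\km\mid\am^{\tau l},\zm^{\tau n})\ \ge\ \max\{\,H(\km)-\tau(n-\beta),\,0\,\},
\]
which is strictly positive for every $\tau<H(\km)/(n-\beta)$ --- a non-empty range exactly because $n-\beta>0$, i.e.\ because $p\neq 1/2$.

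For the decaying regime I would exhibit a key estimator and invoke Fano. Reducing modulo $S$ kills the homophonic randomness, leaving $f^{(t)}(\km)+\vm^{(t)}+S$; by the same Fourier argument the law of $\vm^n+S$ on $\{0,1\}^n/S$ is invariant under translation by a coset $\delta+S$ only for $\delta\in S$ (again because $p\neq 1/2$), so for any $\km'\neq\km$ the expected per-block log-likelihood increment favouring $\km$ over $\km'$ is strictly positive at every $t$ with $f^{(t)}(\km)\oplus f^{(t)}(\km')\notin S$, and for a finite-state keystream generator such $t$ recur periodically once any one of them occurs. Under the (necessary) non-degeneracy assumption that distinct keys are not identified by the generator modulo $S$, a law-of-large-numbers estimate on these independent-but-not-identically-distributed increments, together with a union bound over the finitely many candidate keys, yields a maximum-likelihood decoder with error probability $P_e^{(\tau)}\to0$ as $\tau\to\infty$; Fano's inequality then gives $H(\km\mid\am^{\tau l},\zm^{\tau n})\le H(P_e^{(\tau)})+P_e^{(\tau)}\log(2^{H(\km)}-1)\to0$, forcing $L=0$, so the claimed dichotomy holds with, e.g., $\tau_{thres}=\lceil H(\km)/(n-\beta)\rceil$ (the tail beyond it tending to $0$ by what was just shown). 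I expect this last step to be the main obstacle: the decoder's error cannot be controlled for an arbitrary generator --- if $f$ collapses two keys modulo $S$ the equivocation never reaches $0$ --- so one must isolate and use exactly the right non-degeneracy property of the keystream generator, and the concentration bookkeeping for the non-identically-distributed log-likelihood terms is the delicate part.
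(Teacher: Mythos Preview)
Your argument is sound and takes a genuinely different route from the paper. For the small-$\tau$ regime the paper simply invokes Lemma~\ref{lem:bound} at $\tau=1$ (so $H(\xm^n)=H(\km)$) to get a strictly positive lower bound, whereas you bound the mutual information $I(\km;\zm^{\tau n}\mid\am^{\tau l})\le \tau(n-\beta)$ directly; your route gives an explicit, quantitative threshold $\lceil H(\km)/(n-\beta)\rceil$ that the paper's does not, though note your bound only guarantees positivity at $\tau=1$ provided $H(\km)>n-\beta$, a harmless extra hypothesis in practice. For the large-$\tau$ regime the paper rederives the chain-rule identity
\[
H(\km\mid\am^{\tau l},\zm^{\tau n})=H(\um^{\tau(m-l)}\mid\am^{\tau l},\zm^{\tau n})+H(\vm^{\tau n}\mid\am^{\tau l},\um^{\tau(m-l)},\zm^{\tau n})-H(\um^{\tau(m-l)}\mid\am^{\tau l},\km,\zm^{\tau n})
\]
and drives each term to zero by Fano's inequality, treating the first two via a heuristic ``aggregated code of size $2^{\tau(m-\ell)+|\km|}$ can be minimum-distance decoded for $\tau$ large'' argument. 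Your reduction modulo $S$ followed by maximum-likelihood estimation of $\km$ is more concrete and, importantly, isolates the non-degeneracy hypothesis (distinct keys must not be identified modulo $S$ by the generator) that the paper's decoding argument tacitly assumes but never states; both proofs then finish with Fano. In short, the paper's version is shorter and piggybacks on Lemma~\ref{lem:bound}, while yours is more self-contained, yields a computable threshold, and surfaces the identifiability condition that any rigorous version of the $\tau\to\infty$ claim must use.
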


\begin{proof}
When $\tau = 1$, $\xm^{(1)}=\xm^n =f^{(1)}(\km)$ and accordingly
$H(\xm^{(1)}) = H(\km)$, thus Lemma \ref{lem:bound}
directly implies that $H(\xm^n|\am^l,\zm^n)=H(\km|\am^{l},\zm^{n}) > 0$ is
achievable.

When $\tau > 1$ grows, we employ the following analysis.

By using two different decompositions of $H(\am^{\tau
l},\um^{\tau(m-l)},\xm^{\tau n},\vm^{\tau n},\zm^{\tau n})$ via
the entropy chain rule as done in Lemma \ref{lem:bound}, we get
\begin{eqnarray}
&&H(\km|\am^{\tau l},\zm^{\tau n})\nonumber \\
&=& H(\um^{\tau(m-l)}|\am^{\tau l},\zm^{\tau n}) +
   H(\vm^{\tau n}|\am^{\tau l},\um^{\tau(m-l)},\zm^{\tau n})\nonumber \\
&& -H(\um^{\tau(m-l)}|\am^{\tau l},\km,\zm^{\tau n}).
\label{eq:HKAZ}
\end{eqnarray}
Note that knowing $\am^{\tau l}$, ${\bf Z}^{\tau n}$ can be considered as a $\tau
n$-length degraded version of a binary codeword with $\tau
(m-\ell) + |{\bf K}|$ information bits which is corrupted by a
noise vector ${\bf V}^{\tau n}$. Indeed, without knowing the key, decoding $\um^{\tau(m-l)}$ is
not possible, so the adversary also needs to try to decode $\km$. Assuming that the decoding error
probability of this code is $P_e^*$, Fano's inequality implies
that
\begin{displaymath}
H(\um^{\tau(m-l)}|\am^{\tau l},\zm^{\tau n})  <
H(\um^{\tau(m-l)}, \km |\am^{\tau l},\zm^{\tau n})
\end{displaymath}
\begin{displaymath}
\leq H(P_e^*) + P_e^* \log(2^{\tau(m-\ell) + |{\bf K}|}-1) \;.
\end{displaymath}
Combining the decoding ability of $C_{ECC}$ with a minimum
distance decoding yields a decoding error for the aggregated code
of size $2^{\tau(m-\ell) + |{\bf K}|}$ that tends to zero provided
long enough codewords, that is $P_e^* \rightarrow 0$, and accordingly
$H(\um^{\tau(m-l)}|\am^{\tau l},\zm^{\tau n}) \rightarrow 0$ when
$\tau$ is large enough.

In a similar manner and employing
\begin{displaymath}
H(\vm^{\tau n}|\am^{\tau l}\!,\um^{\tau(m-l)}\!,\zm^{\tau n}) <
H(\vm^{\tau n}\!, \km |\am^{\tau l}\!,\um^{\tau(m-l)}\!,\zm^{\tau n}),
\end{displaymath}
the decoding ability of $C_{ECC}$ with a minimum distance
decoding as used above implies that
$H(\vm^{\tau n}|\am^{\tau l},\um^{\tau(m-l)},\zm^{\tau n}) \rightarrow 0$ when
$\tau$ is large enough.

To take care of $H(\um^{\tau(m-l)}|\am^{\tau l},\km,\zm^{\tau n})$,
we again use a decoding argument, since $\zm^{\tau n}$ is known.
However, it is important to note here that $\km$ is known too.
Thus even though we look at a block
\[
{\bf U}^{\tau (m-l)}=[{\bf U}^{(1)}||{\bf U}^{(2)}||\ldots||{\bf
U}^{(\tau)}],
\]
the knowledge of $\km$ makes each block $\um^{(t)}$ independent,
and thus we can decode each of them separately and the probability
of error is $P_e^\tau$. Fano's equality finally yields
\begin{eqnarray*}
H(\um^{\tau(m-l)}|\am^{\tau l},\km,\zm^{\tau n})
\!\!\!&\leq\!\!\! &H(P_e^\tau)+P_e^\tau\log(2^{\tau(m-l)}-1)\\
\!\!\!&\leq\!\!\! &H(\epsilon^\tau)+\epsilon^\tau\log(2^{\tau(m-1)}-1)
\end{eqnarray*}
and
\begin{equation}\label{eq:F}
H(\um^{\tau(m-l)}|\am^{\tau l},\km,\zm^{\tau n}) \rightarrow  0
\end{equation}
since $P_e=\epsilon \rightarrow 0$ by design of $C_{ECC}$.

The above consideration of the cases $\tau=1$ and $\tau >> 1$ also
implies the existence of a threshold $\tau_{thres}$.

\end{proof}

The statement is intuitively clear. The security depends on the
length $|\km|$ of the key noting that this length is fixed in the
system. Accordingly, when the keystream generator is used for a
period $\tau$ that varies, as long as $\tau<\tau_{thresh}$, the
key is protected by the randomness of the noisy channel and of the
wiretap encoder, but that protection cannot last forever if the
adversary collects too much data.

Note that all this analysis is true for ``realistic channels'' where the
noise is not uniformly distributed. The uniformly distributed noise in the
communication channel makes error-correction infeasible, which
explain the assumption in the above theorem.

\noindent Theorem \ref{theor:pass} directly implies the following corollary for noiseless
channels.

\begin{cor} When $\vm^{\tau n} = {\bf
0}$ and the parameter $\tau$ is large enough we have:
\begin{equation}
H(\km |\am^{\tau l},\zm^{\tau n}) = 0 \;. \\
\end{equation}
\end{cor}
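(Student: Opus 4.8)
The plan is to read off the corollary as the degenerate-noise instance of Theorem~\ref{theor:pass}. First I would check admissibility: $\vm^{\tau n}={\bf 0}$ means $\Pr(V_i^{(j)}=0)=1\neq 0=\Pr(V_i^{(j)}=1)$, with both values different from $1/2$, so the hypothesis of Theorem~\ref{theor:pass} is satisfied and a threshold $\tau_{thres}$ exists. The only thing beyond a literal restatement is that in the noiseless regime the limiting value $0$ is attained \emph{exactly}, not merely approached, so the equality sign in the corollary is justified.

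Next I would invoke the identity (\ref{eq:HKAZ}) derived in the proof of Theorem~\ref{theor:pass} by applying the entropy chain rule in two ways; it holds verbatim here since its derivation used no property of the noise:
\begin{eqnarray*}
H(\km|\am^{\tau l},\zm^{\tau n})
&=& H(\um^{\tau(m-l)}|\am^{\tau l},\zm^{\tau n})
+ H(\vm^{\tau n}|\am^{\tau l},\um^{\tau(m-l)},\zm^{\tau n})\\
&& - H(\um^{\tau(m-l)}|\am^{\tau l},\km,\zm^{\tau n}).
\end{eqnarray*}
With $\vm^{\tau n}={\bf 0}$ we have $H(\vm^{\tau n})=0$, and since conditioning does not increase entropy the middle term vanishes. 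Dropping the nonnegative subtracted term leaves $0\le H(\km|\am^{\tau l},\zm^{\tau n})\le H(\um^{\tau(m-l)}|\am^{\tau l},\zm^{\tau n})$, so it suffices to show the right-hand side is $0$ for $\tau$ large.

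For that I would reuse the decoding argument of Theorem~\ref{theor:pass}: given $\am^{\tau l}$, the vector $\zm^{\tau n}$ is the image of the $\tau(m-\ell)+|\km|$ information bits $(\um^{\tau(m-l)},\km)$ under the aggregated linear-plus-keystream encoding, now corrupted by \emph{no} noise. For $\tau$ large enough this aggregated code has length $\tau n>\tau(m-\ell)+|\km|$ (since $n>m$) and, by the same design assumption on $C_{ECC}$ and the keystream generator that underlies Theorem~\ref{theor:pass}, full rank, hence the encoding map is injective. Minimum-distance decoding then recovers $(\um^{\tau(m-l)},\km)$ with error probability $P_e^\ast=0$, so Fano's inequality gives $H(\um^{\tau(m-l)},\km\mid\am^{\tau l},\zm^{\tau n})\le H(0)+0=0$, whence $H(\um^{\tau(m-l)}|\am^{\tau l},\zm^{\tau n})=0$ and, by the bound above, $H(\km|\am^{\tau l},\zm^{\tau n})=0$.

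The only step that needs real care — and the one I expect to be the mild obstacle — is upgrading ``$P_e^\ast\to 0$'' to ``$P_e^\ast=0$'', i.e.\ arguing that above the threshold the aggregated encoding is genuinely injective rather than just asymptotically decodable. Since this is precisely the noiseless instance of the ``long enough codewords'' hypothesis already invoked in Theorem~\ref{theor:pass}, I would handle it by noting that feeding a zero noise vector into that argument removes the decoding error entirely; all remaining computations are routine.
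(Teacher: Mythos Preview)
Your proposal is correct and follows the same route the paper intends: the corollary is stated immediately after Theorem~\ref{theor:pass} with the one-line justification ``Theorem~\ref{theor:pass} directly implies the following corollary for noiseless channels,'' and no further argument is given. Your write-up is in fact more careful than the paper, since you explicitly address why the limiting statement $H(\km|\am^{\tau l},\zm^{\tau n})\to 0$ upgrades to an exact equality when $\vm^{\tau n}={\bf 0}$, by specializing the identity~(\ref{eq:HKAZ}) and the Fano/decoding step to the zero-noise case.
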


%
%

\section{Security against an Active Adversary}
\label{sec:active}

We now consider an active and therefore more powerful adversary.
There are many possible scenarios for an active adversary. We assume in this work
that
\begin{enumerate}
\item
he can modify the data on the communication channel, that is, inject controlled noise,
\item
he can learn the effect of the modified channel at the receiving side, by listening
to the feedback link that tells whether decoding was successful.
\end{enumerate}
Let us be more precise. While the transmitter sends
\[
{\bf y} = C_{ECC}(C_H({\bf a}||{\bf u}))\oplus {\bf x}
\]
in an already security enhances setting (\ref{eq:ych}), the receiver sees
its noisy version
\[
\zv=\yv \oplus \vv.
\]
The active adversary is allowed to inject some extra noise $\vv^*$ over the channel,
so that now, the legitimate receiver sees $\yv\oplus\vv'$, where $\vv'$ contains both the noise $\vv$
coming from the channel and the noise $\vv^*$ controlled by the adversary:
\begin{eqnarray}
{\bf z} & =&  {\bf y} \oplus {\bf v} \oplus {\bf v}^* \nonumber \\
        & =&  C_{ECC}(C_H({\bf a}||{\bf u}))\oplus {\bf x}\oplus
              {\bf v} \oplus {\bf v}^*. \label{eq:zactive}
\end{eqnarray}
As earlier (Section \ref{sec:model}), the receiver first decrypts
its message using its secret key and locally generated keystream
\[
\zv\oplus\xv = C_{ECC}(C_H({\bf a}||{\bf u}))\oplus {\bf v} \oplus{\bf v}^*
\]
and then try to decode $\zv\oplus\xv$:
\begin{eqnarray*}
C_{ECC}^{-1}({\bf z} \oplus {\bf x})
&=& C_{ECC}^{-1}(C_{ECC}(C_H({\bf a}||{\bf u}))\oplus {\bf v} \oplus {\bf v}^* )\\
&=& C_H({\bf a}||{\bf u})
\end{eqnarray*}
under the assumption that the error correcting code can correct the
errors introduced by ${\bf v}$, so as to get
\[
{\bf a} = C_H^{-1}(C_{ECC}^{-1}({\bf z} \oplus {\bf x})).
\]
Because of the extra noise ${\bf v}^*$, the probability of decoding correctly at the
receiver may decrease.
In the meantime, the active attacker can listen to the feedback
channel so that he knows whether the decoding failed or was successful.
His goal is again to find the key. His strategy then consists in adding
different noise vectors ${\bf v}^*$ and to observe the feedback
channel to see whether the chosen noise made the decoding fail, in order to
gather information.

We keep our earlier notation, that is
\begin{itemize}
\item
$u_i$, random bits used in the wiretap encoder,
\item
$x_i$, output bits of the keystream generator,
\item
$v_i'$, random components of the additive noise ${\bf v}' = {\bf v}
\oplus {\bf v}^*$,
\item
$a_i$, bits of the plain text,
\item
$z_i$, bits of the received message
\end{itemize}
are realizations of certain random variables $U_i$, $X_i$, $V_i'$, $V_i$, ${V_i}^*$, $Z_i$, $i=1,2,...,n$
and $A_i$, $i=1,\ldots,l$. The corresponding vectors of random variables are
denoted as follows: ${\bf A}^l = [A_i]_{i=1}^{\ell}$,
${\bf U}^{m-l}= [U_i]_{i=1}^{m-\ell}$, ${\bf X}^n= [X_i]_{i=1}^n$,
${{\bf V}'}^n=[V_i']_{i=1}^n$, ${{\bf V}}^n=[V_i]_{i=1}^n$, ${{\bf V}^*}^n=[{V_i}^*]_{i=1}^n$,
and ${\bf Z}^n = [Z_i]_{i=1}^n$. Similarly to (\ref{eq:Z2}),
\begin{equation}\label{eq:Z2act}
{\bf Z}^n = C_{ECC}(C_{H,a}({\bf A}^l))\oplus C_{ECC}(C_{H,u}({\bf
U}^{m-l})) \oplus{\bf X}^n \oplus {\bf V}^n\oplus{{\bf V}^*}^n.
\end{equation}

Finally, let $f_d$ be a binary flag which indicates whether the
decoding result is indeed ${\bf a}$ or has failed, and accordingly $f_d$ can be considered as a
realization of a binary random variable $F_d$.

The lemma below gives a bound on the resistance of the scheme to an active attack
where the adversary not only controls the noise but also knows ${\bf a},{\bf z}$
and $f_d$.
\begin{lemma}\label{lem:boundactive}
The equivocation of the keystream segment knowing the plaintext,
the received signal, and the decoding tag, can be lower bounded as
follows:
\begin{eqnarray*}
&& H(\xm^n|\am^l,\zm^n, F_d)\geq \\
& & \min \{H(\um^{m-l}), H(\xm^n)+ H(\vm^n|F_d)\}+
\min \{H(\vm^n|F_d), H(\xm^n) \} - \delta(C_{ECC}),
\end{eqnarray*}
where
\[
\delta(C_{ECC})  =  H(P_e)+P_e \log(2^{m-l}-1)\rightarrow 0,
\]
since $P_e\rightarrow 0$.
\end{lemma}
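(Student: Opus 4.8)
The plan is to mirror exactly the argument of Lemma~\ref{lem:bound}, inserting the decoding flag $F_d$ as an extra conditioning variable throughout, and then to check that the one genuinely new ingredient --- the behaviour of the channel-noise entropy under conditioning on $F_d$ --- enters only through the term $H(\vm^n\mid F_d)$ that already appears in the statement. First I would apply the entropy chain rule to $H(\am^l,\um^{m-l},\xm^n,\vm^n,\zm^n,F_d)$, expanding it in the two different orders used in Lemma~\ref{lem:bound}, now always carrying $F_d$ along as a conditioning variable. Since $F_d$ is a deterministic function of the decoded word (hence of $\am^l,\um^{m-l},\xm^n,\vm^n,\zm^n$ --- or, via (\ref{eq:Z2act}), of any four of the underlying vectors plus $\zm^n$), the two terminal conditional entropies still vanish: $H(\xm^n\mid\am^l,\um^{m-l},\vm^n,\zm^n,F_d)=0$ because $\xm^n$ is already determined by $\am^l,\um^{m-l},\vm^n,\zm^n$ through (\ref{eq:Z}) read with the extra noise folded in, and likewise $H(\vm^n\mid\am^l,\um^{m-l},\xm^n,\zm^n,F_d)=0$. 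Subtracting the two decompositions gives, as before,
\[
H(\xm^n\mid\am^l,\zm^n,F_d)=H(\um^{m-l}\mid\am^l,\zm^n,F_d)+H(\vm^n\mid\am^l,\um^{m-l},\zm^n,F_d)-H(\um^{m-l}\mid\am^l,\xm^n,\zm^n,F_d).
\]

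Next I would evaluate the first two terms. For $H(\um^{m-l}\mid\am^l,\zm^n,F_d)$: using (\ref{eq:Z2act}) and invertibility of $C_{ECC}$ and $C_H$ exactly as in Lemma~\ref{lem:bound}, $C_{ECC}(C_{H,u}(\um^{m-l}))$ is a bijective image of $\um^{m-l}$ and equals $C_{ECC}(C_{H,a}(\am^l))\oplus\zm^n\oplus\xm^n\oplus\vm^n$, so conditioned on $\am^l,\zm^n,F_d$ its entropy equals $H(\xm^n\oplus\vm^n\mid F_d)$. Now I would argue that $\xm^n$ is independent of the pair $(\vm^n,F_d)$ --- the keystream is a function of the key alone, independent of the channel noise and of whatever the adversary does --- so $H(\xm^n\oplus\vm^n\mid F_d)=H(\xm^n)+H(\vm^n\mid F_d)$ whenever $H(\xm^n)\le H(\vm^n\mid F_d)$; combining with the trivial bound $H(\um^{m-l}\mid\am^l,\zm^n,F_d)\le H(\um^{m-l})$ gives $H(\um^{m-l}\mid\am^l,\zm^n,F_d)=\min\{H(\um^{m-l}),H(\xm^n)+H(\vm^n\mid F_d)\}$. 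The second term $H(\vm^n\mid\am^l,\um^{m-l},\zm^n,F_d)$ is handled symmetrically: solving (\ref{eq:Z2act}) for $\vm^n$ and using that $\vm^n$ conditioned on $(\am^l,\um^{m-l},\zm^n)$ is a shifted copy of $\xm^n$, together with $H(\vm^n\mid\am^l,\um^{m-l},\zm^n,F_d)\le H(\vm^n\mid F_d)$, yields $\min\{H(\vm^n\mid F_d),H(\xm^n)\}$.

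For the last term $H(\um^{m-l}\mid\am^l,\xm^n,\zm^n,F_d)$, conditioning on $F_d$ can only decrease it, so $H(\um^{m-l}\mid\am^l,\xm^n,\zm^n,F_d)\le H(\um^{m-l}\mid\am^l,\xm^n,\zm^n)$, and the latter is already bounded in Lemma~\ref{lem:bound} by Fano's inequality: recovering $\um^{m-l}$ from $(\am^l,\xm^n,\zm^n)$ is decoding out the noise via $C_{ECC}$, with error probability $P_e=\epsilon\to0$, giving $H(P_e)+P_e\log(2^{m-l}-1)=\delta(C_{ECC})\to0$. Assembling the three pieces produces precisely the claimed bound. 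I expect the main obstacle --- and the one point that genuinely differs from Lemma~\ref{lem:bound} and should be stated carefully --- to be justifying that $\xm^n\perp(\vm^n,F_d)$ so that the entropy of the XOR splits even after conditioning on the decoding flag; one must be explicit that the adversary's injected noise $\vm^{*n}$ and the native noise $\vm^n$, and hence $F_d$, depend on the channel and on the adversary's strategy but never on the keystream, since the keystream is generated from the secret key independently of the transmission. A secondary subtlety is that $F_d$ is not literally a function of $\um^{m-l},\am^l,\xm^n,\vm^n$ alone but of the \emph{decoder's} output; however, since the decoder is deterministic given $\zm^n\oplus\xm^n$, the flag is a function of $(\zm^n,\xm^n)$, which suffices for every vanishing-conditional-entropy claim used above.
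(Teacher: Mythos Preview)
Your proposal follows essentially the same route as the paper: two chain-rule decompositions of the joint entropy, subtraction to isolate $H(\xm^n\mid\am^l,\zm^n,F_d)$ as a sum of three terms, evaluation of the first two via the min-expressions, and Fano for the third. The only cosmetic difference is that the paper carries the \emph{total} noise ${\vm'}^n=\vm^n\oplus{\vm^*}^n$ through the decomposition (so the vanishing terminal entropies follow directly from (\ref{eq:Z2act}) without any ``folding in'' of the adversary's noise) and inserts $F_d$ as one of the variables in the chain, invoking $H(F_d\mid\am^l,\ldots,\zm^n)=0$ to drop it, whereas you treat ${\vm^*}^n$ as a known constant from the outset and keep $F_d$ purely as a conditioning variable; the paper only swaps $H({\vm'}^n\mid F_d)$ for $H(\vm^n\mid F_d)$ at the very end, using that ${\vm^*}^n$ is known to the adversary. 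Your identification of the independence $\xm^n\perp(\vm^n,F_d)$ as the one genuinely new point, and of $F_d$ being a deterministic function of $(\zm^n,\xm^n)$, matches exactly what is needed and is arguably stated more carefully than in the paper.
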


\begin{proof}
As in Lemma \ref{lem:bound}, we start with two different chain rule decompositions
of the same joint entropy. On the one hand,
\begin{eqnarray*}
& & H(\am^l,\um^{m-l},\xm^n,{\vm'}^n,\zm^n, F_d) \\
&=& H(\am^l) + H(\zm^n|\am^l) + H(\um^{m-l}|\am^l,\zm^n)+\\
& & H(F_d|\am^l,\um^{m-l},\zm^n) + H({\vm'}^n|\am^l,\um^{m-l},\zm^n, F_d) + \\
& & H(\xm^n|\am^l,\um^{m-l},{\vm'}^n,\zm^n, F_d)\\
&=& H(\am^l) + H(\zm^n|\am^l) + H(\um^{m-l}|\am^l,\zm^n)\\
& & + H({\vm'}^n|\am^l,\um^{m-l},\zm^n, F_d),
\end{eqnarray*}
since from (\ref{eq:Z2act}) we have that ${\bf X}^n = C_{ECC}(C_H({\bf A}^l||{\bf U}^{m-l}))\oplus{\bf Z}^n
\oplus{{\bf V}'}^n$ implying $H(\xm^n|\am^l,\um^{m-l},{\vm'}^n,\zm^n, F_d)=0$, and
$H(F_d|\am^l,\um^{m-l},\zm^n)=0$, since knowing $\am^l$ and $\zm^n$, decoding can be performed on $\zm$ and the decoded value can be compared to $\am^l$, yielding $F_d$.

On the other hand,
\begin{eqnarray*}
& & H(\am^l,\um^{m-l},\xm^n,{\vm'}^n,\zm^n, F_d) \\
&=& H(\am^l) + H(\zm^n|\am^l) + H(\xm^n|\am^l,\zm^n)+\\
& & H(F_d|\am^l,\xm^n,\zm^n) + H(\um^{m-l}|\am^l,\xm^n,\zm^n, F_d) + \\
& & H({\vm'}^n|\am^l,\um^{m-l},\xm^n,\zm^n, F_d)\\
&=& H(\am^l) + H(\zm^n|\am^l) + H(\xm^n|\am^l,\zm^n)\\
& & + H(\um^{m-l}|\am^l,\xm^n,\zm^n, F_d),
\end{eqnarray*}
noticing that $H({\vm'}^n|\am^l,\um^{m-l},\xm^n,\zm^n, F_d)=0,$ again
using from (\ref{eq:Z2act}) that ${{\bf V}'}^n = C_{ECC}(C_H({\bf A}^l||{\bf U}^{m-l}))\oplus{\bf Z}^n
\oplus{\bf X}^n$, and that $H(F_d|\am^l,\xm^{m-l},\zm^n)=0$ for the same reason as above.

By combining the two decompositions, we deduce that
\begin{equation}\label{eq:hxaz}
H(\xm^n|\am^l,\zm^n)
= H(\um^{m-l}|\am^l,\zm^n,F_d) + H({\vm'}^n|\am^l,\um^{m-l},\zm^n, F_d)
 - H(\um^{m-l}|\am^l,\xm^n,\zm^n, F_d),
\end{equation}
where
\[
H(\um^{m-l}|\am^l,\zm^n, F_d) =  \min \{H(\um^{m-l}), H(\xm^n\oplus{\vm'}^n|F_d) \}
\]
since $C_{ECC}(C_H({\bf U}^{m-l}))=C_{ECC}(C_H({\bf A}^l))\oplus{\bf Z}^n
\oplus{\bf X}^n\oplus {{\bf V}'}^n$ implies that
\[
H(\um^{m-l}|\am^l,\zm^n, F_d) =H(\xm^n\oplus{\vm'}^n|F_d)
\]
and conditioning reduces entropy, namely,
\[
H(\um^{m-l}|\am^l,\zm^n, F_d)\leq H(\um^{m-1}).
\]

Similarly, again using (\ref{eq:Z2act}) and that
\[
H({\vm'}^n|\am^l,\um^{m-l},\zm^n, F_d)\leq H({\vm'}^n|F_d),
\]
we obtain that
\[
H({\vm'}^n|\am^l,\um^{m-l},\zm^n, F_d) =\min \{H({\vm'}^n|F_d), H(\xm^n) \}.
\]

To summarize, Equation (\ref{eq:hxaz}) is now given by
\begin{eqnarray*}
&&H(\xm^n|\am^l,\zm^n)\\
&=&\min \{H(\um^{m-l}), H(\xm^n\oplus{\vm'}^n|F_d) \}+
    \min \{H({\vm'}^n|F_d), H(\xm^n) \}\\
&&-H(\um^{m-l}|\am^l,\xm^n,\zm^n,F_d)
\end{eqnarray*}
where ${\vm'}^n=\vm^n+{\vm^*}^n$ and ${\vm^*}^n$ is known to the adversary, so
that we in fact have
\begin{eqnarray*}
&&H(\xm^n|\am^l,\zm^n)\\
&=& \min \{H(\um^{m-l}), H(\xm^n)+ H(\vm^n|F_d)\}+
\min \{H(\vm^n|F_d), H(\xm^n) \}\\
&&-H(\um^{m-l}|\am^l,\xm^n,\zm^n,F_d)
\end{eqnarray*}
using further that ${\bf X}^n$ and ${\bf V}^n$ are mutually independent.

We are finally left with bounding $H(\um^{m-l}|\am^l,\xm^n,\zm^n,
F_d)$. For the adversary, recovering ${\bf U}^{m-l}$ when ${\bf A}^l$, ${\bf X}^n$ and
${\bf Z}^n$ are given is the decoding problem of removing the noise
${\bf V}^n$ (he knows ${{\bf V}^*}^n$) employing the code $C_{ECC}$ with error probability
$P_e$. This can be bounded using Fano's inequality:
\begin{eqnarray*}
H(\um^{m-l}|\am^l,\xm^n,\zm^n, F_d)
&=& H(\um^{m-l}|\am^l,\xm^n,\zm^n)\\
&\leq &  H(P_e) + P_e \log(2^{m-l}-1)
\end{eqnarray*}
since knowing whether the receiver could decode the worst noise does not affect the error
capability of $C_{ECC}$.
This concludes the proof.
\end{proof}

Let us compare the result of Lemmas \ref{lem:bound} and \ref{lem:boundactive}:
\begin{eqnarray*}
H(\xm^n|\am^l,\zm^n)&\geq & \min\{H(\um^{m-l}), H(\xm^n)+H(\vm^n)\}+ \min\{H(\vm^n),H(\xm^n) \}- \delta(C_{ECC}),\\
H(\xm^n|\am^l,\zm^n, F_d) &\geq & \min \{H(\um^{m-l}), H(\xm^n)+ H(\vm^n|F_d)\}+\min \{H(\vm^n|F_d), H(\xm^n) \}-\delta(C_{ECC}),
\end{eqnarray*}
where
\[
\delta(C_{ECC})  =  H(P_e)+P_e \log(2^{m-l}-1)\rightarrow 0,
\]
since $P_e\rightarrow 0$. As expected, the equivocation in the
case of an active adversary is smaller than for a passive
adversary, since $H(\vm^n|F_d)\leq H(\vm^n)$.

Based on the above, we easily get a counterpart of Theorem \ref{theor:pass} for the case of an active adversary.

\begin{theor}\label{theor:act}
When ${\rm Pr}(V_i^{(j)}=0) \neq {\rm Pr}(V_i^{(j)}=1) \neq 1/2$,
$i=1,2,...,n$, $j=1,2,...,\tau$, there exists a threshold $\tau_{thres,act}$
such that
\[
H(\km|\am^{\tau l},\zm^{\tau n},F_d) \;\;
\left\{ \begin{tabular}{lll}
$> 0$ & {\rm for} & $\tau < \tau_{thres,act}$ \\
$\rightarrow 0$   & {\rm for} & $\tau \geq \tau_{thres,act} \;.$\\
\end{tabular}\right.
\]
We have that $\tau_{thres,act}<\tau_{thres}$, the threshold for a passive adversary.
\end{theor}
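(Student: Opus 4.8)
The plan is to mirror the proof of Theorem \ref{theor:pass} step by step, replacing Lemma \ref{lem:bound} by Lemma \ref{lem:boundactive} throughout, and then to pin down the strict inequality $\tau_{thres,act}<\tau_{thres}$ by comparing the two bounds term by term. First I would handle the case $\tau=1$: here $\xm^{(1)}=\xm^n=f^{(1)}(\km)$, so $H(\xm^{(1)})=H(\km)$, and Lemma \ref{lem:boundactive} gives $H(\km\mid\am^l,\zm^n,F_d)=H(\xm^n\mid\am^l,\zm^n,F_d)>0$ as long as $\min\{H(\um^{m-l}),H(\xm^n)+H(\vm^n\mid F_d)\}+\min\{H(\vm^n\mid F_d),H(\xm^n)\}>\delta(C_{ECC})$, which holds for the designed system provided $H(\vm^n\mid F_d)>0$ — and that positivity is exactly where the hypothesis ${\rm Pr}(V_i^{(j)}=0)\neq{\rm Pr}(V_i^{(j)}=1)$ is used, since a non-degenerate noise retains entropy even after conditioning on the single bit $F_d$.

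Next, for $\tau>1$ I would repeat the two chain-rule decompositions of $H(\am^{\tau l},\um^{\tau(m-l)},\xm^{\tau n},{\vm'}^{\tau n},\zm^{\tau n},F_d)$ exactly as in Lemma \ref{lem:boundactive} (now with $F_d$ the aggregated decoding flag over all $\tau$ rounds), obtaining the analogue of (\ref{eq:HKAZ}),
\[
H(\km\mid\am^{\tau l},\zm^{\tau n},F_d)=H(\um^{\tau(m-l)}\mid\am^{\tau l},\zm^{\tau n},F_d)+H({\vm'}^{\tau n}\mid\am^{\tau l},\um^{\tau(m-l)},\zm^{\tau n},F_d)-H(\um^{\tau(m-l)}\mid\am^{\tau l},\km,\zm^{\tau n},F_d).
\]
As in Theorem \ref{theor:pass}, knowing $\am^{\tau l}$ the vector $\zm^{\tau n}$ is a $\tau n$-length degraded codeword carrying $\tau(m-\ell)+|\km|$ information bits corrupted by ${\vm}^{\tau n}$ (the adversary's contribution ${\vm^*}^{\tau n}$ is known and subtracts off); a minimum-distance decoding argument combined with Fano gives $P_e^\ast\to 0$, hence the first term $\to 0$ for $\tau$ large. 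The same argument applied to ${\vm}^{\tau n}$ (noting ${\vm'}^{\tau n}={\vm}^{\tau n}\oplus{\vm^*}^{\tau n}$ with ${\vm^*}^{\tau n}$ known) forces the second term $\to 0$, and once $\km$ is conditioned on, the blocks $\um^{(t)}$ decouple so the third term is bounded by $H(\epsilon^\tau)+\epsilon^\tau\log(2^{\tau(m-l)}-1)\to 0$. Existence of a threshold $\tau_{thres,act}$ then follows from the contrast between the $\tau=1$ and $\tau\gg1$ behaviours.

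Finally, for the comparison $\tau_{thres,act}<\tau_{thres}$ I would invoke the term-by-term inequality already displayed just before the theorem statement: since $F_d$ is a deterministic function of the data seen by the legitimate receiver, conditioning gives $H(\vm^n\mid F_d)\le H(\vm^n)$, and both $\min$-expressions in the active bound are then $\le$ their passive counterparts, so $H(\km\mid\am^{\tau l},\zm^{\tau n},F_d)\le H(\km\mid\am^{\tau l},\zm^{\tau n})$ for every $\tau$. Consequently the active equivocation reaches $0$ no later than the passive one, giving $\tau_{thres,act}\le\tau_{thres}$; strictness follows because the active adversary's ability to inject the worst-case ${\vm^*}^{\tau n}$ maximising the decoding failure probability strictly accelerates the decay of the middle term $H(\vm^n\mid F_d)$ relative to $H(\vm^n)$ whenever the noise is non-degenerate. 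The main obstacle I anticipate is making this last point fully rigorous: one must argue that the active strategy genuinely reduces $H(\vm^n\mid F_d)$ strictly (not just weakly), which requires exhibiting an injected-noise strategy for which the decoding flag is informative about $\vm^n$ — a careful but not deep argument, since any ${\vm^*}^{\tau n}$ pushing the received word near a decision boundary makes $F_d$ correlated with the sign of the residual noise.
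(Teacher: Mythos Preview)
Your proposal is essentially the same argument as the paper's: you invoke Lemma~\ref{lem:boundactive} for $\tau=1$, rewrite the key equivocation via the chain-rule identity (\ref{eq:HKAZ}) augmented by $F_d$, and then reuse the decoding/Fano arguments of Theorem~\ref{theor:pass} to drive each of the three terms to zero for large $\tau$. The paper's proof is in fact terser than yours---it simply says ``every term tends to zero, using a decoding argument, which will hold similarly here, since the knowledge of $F_d$ cannot make the decoding more difficult''---so your more explicit handling of ${\vm^*}^{\tau n}$ as known and subtractable is a welcome clarification, not a departure.

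Where you go beyond the paper is in the last paragraph, on the strict inequality $\tau_{thres,act}<\tau_{thres}$. The paper's own proof does not address this claim at all; it establishes only the existence of $\tau_{thres,act}$ and stops. Your route via $H(\vm^n\mid F_d)\le H(\vm^n)$ gives the weak inequality $\tau_{thres,act}\le\tau_{thres}$ cleanly, and your honest flagging of the obstacle to strictness is apt: deducing a \emph{strict} ordering of thresholds from a (non-strict) inequality between lower bounds is not in general valid, and the paper offers no argument here either. So on this point you are not missing something the paper supplies---you are attempting to fill a gap the paper leaves open.
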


\begin{proof}
When $\tau = 1$, $\xm^{(1)}=\xm^n =f^{(1)}(\km)$, Lemma \ref{lem:bound}
directly implies that $H(\xm^n|\am^l,\zm^n,F_d)=H(\km|\am^{l},\zm^{n},F_d) > 0$ is
achievable.

When $\tau > 1$ grows, we know from (\ref{eq:HKAZ}) that
\begin{eqnarray}
&&H(\km|\am^{\tau l},\zm^{\tau n},F_d)\nonumber \\
&=& H(\um^{\tau(m-l)}|\am^{\tau l},\zm^{\tau n},F_d) +
   H(\vm^{\tau n}|\am^{\tau l},\um^{\tau(m-l)},\zm^{\tau n},F_d)\nonumber \\
&& -H(\um^{\tau(m-l)}|\am^{\tau l},\km,\zm^{\tau n},F_d).
\end{eqnarray}
Now it is shown in the proof of Theorem \ref{theor:pass} that every term tends to zero,
using a decoding argument, which will hold similarly here, since the knowledge of $F_d$
cannot make the decoding more difficult.
\end{proof}

%
%

\section{Practical Implications and Applications Issues}
\label{sec:appli}

This section provides a generic discussion of the usefulness and
possible applications of the proposed approach.

\subsection{Implications of the security evaluation}

The analysis given in Sections \ref{sec:passive} and \ref{sec:active} shows
that in systems where the encoding-encryption paradigm is employed,
involvement of pure randomness via concatenation of dedicated wire-tap
and error-correction coding (instead of error-correction only)
provides an increased cryptographic security, by combining pseudo-randomness,
randomness and coding, which in a known-plaintext cryptanalytic scenario
implies an increased resistance against threats on the secret key.

The performed information-security evaluation more precisely points out
the following desirable security properties of the proposed approach:
(i) When the sample available for cryptanalysis is below a
certain size, the scheme provides uncertainty about the secret
key; (ii) Complexity of the secret key recovery appears as a
highly computationally complex problem even if the
available sample is such that the posterior uncertainty about the
secret key tends to zero. The main consequence of (i) is that even if
exhaustive search were to be employed for the secret key recovery,
a (large) number of candidates will appear. The statement
(ii) is an implication of the proofs of Theorems 1 and 2, where
the reduction to zero of the posterior uncertainty about the
secret key appears assuming employment of a decoding which has
complexity proportional to the exhaustive search over all possible
secret keys. Accordingly, the uncertainty tends to zero at the
expense of a decoding with exponential complexity. Actually, the
decrease of the uncertainty about the secret key with the increase
of the sample available for cryptanalysis appears as a consequence
of decoding capabilities of a low rate random binary block codes,
but at the expense of the decoding complexity which is exponential
in the secret key length.

The above features (i) and (ii) hold not only in a passive
attacking scenario where the attacker performs cryptanalysis based
on recording the ciphertext from a public communication channels,
but also in certain active attacking scenarios where the attacker
can modify the ciphertext and learn the effects of these
modifications.

\subsection{Framework for applications}

The encoding-encryption paradigm for secure and reliable
communications enjoys the following desirable properties: (i) When
the decryption is performed by bitwise XORing the keystream to the
ciphertext, an error in a bit before decryption causes an error in
the corresponding bit after decryption, without any
error-propagation, and (ii) Provides non-availability of the
error-free keystream when the communication channel is a noisy
one.

The proposed approach for enhancing the security of the
communications systems which follow encoding-encryption paradigm
could be employed in the design of these systems from scratch as
well as in upgrading of the existing ones.

In the case of upgrading the existing systems, the implementation
assumption is that the employed, already existing, binary linear
block error-correction code $(m,n)$ which encodes $m$ bits into a
codeword from $GF(2^n)$, could be replaced with a binary block
code $(m',n)$ with the same error correction capability but with
$m' > m$. Accordingly, $m'-m$ random bits can be concatenated with
$m$ information bits and mapped into the new $m$-bits via a
homophonic encoder. The obtained output from homophonic encoder is
the input for the error-correcting one. Taking into account the
notation from Section 2, the previous means that instead of
performing $C_{ECC}({\bf a})$ which is a linear mapping
$\{0,1\}^{m} \rightarrow \{0,1\}^{n}$, the following should be
performed: $C_{ECC}(C_H({\bf a} || {\bf u}))$ where ${\bf a} ||
{\bf u}$ is a concatenation of an $m$-dimensional vector and an
$m'-m$-dimensional one, $C_H(\cdot)$ is a linear mapping
$\{0,1\}^{m'} \rightarrow \{0,1\}^{m'}$ and $C_{ECC}(\cdot)$ is a
linear mapping $\{0,1\}^{m'} \rightarrow \{0,1\}^{n}$. On the
receiving side, the decoding procedures after decryption are
straightforward (see Fig. 2): The error correction decoding
removes the random errors, and the message ${\bf a}$ is obtained
by truncating of the inverse linear mapping corresponding to the
homophonic decoding.

In the case of a design of the encoding-encryption system from
the scratch, the design should include a coding box which performs
the concatenation of homophonic and error-correction coding in a
manner which fits the rate of the concatenated code to the given
constraints.

Note that from an implementation point of view, replacement of a
linear block encoding by a concatenation of a block linear
homophonic and error correction encoding is a replacement of one
binary matrix with another binary matrix which is the product of
the matrices corresponding to the homophonic and error-correction
encoders. Accordingly, the implementation complexity of two
concatenated codes could be approximately the same as the
implementation complexity of an error correcting code only.

%
%

\section{Conclusion}
\label{sec:conclusion}

The problem addressed in this paper is the one of enhancing
security of certain communications systems which employ error
correction encoding of the messages and encryption of the obtained
codewords in order to provide both secrecy and reliability of
the transmission. This paper yields a proposal for providing the
enhanced security of the considered systems employing randomness
and dedicated coding and the information-theoretic security
evaluation of the proposed approach. The analysis given in this
paper implies that in the systems where the encoding-encryption
paradigm is employed, the cryptographic security can
be enhanced via involvement of a homophonic coding based on pure
randomness as follows: Instead of just error-correction encoding
before the encryption, this paper proposes employment of a
concatenation of linear block homophonic and error-correction
encoding. The proposal and its cryptographic security evaluation
are given in a generic manner and accordingly yield a generic
framework for particular applications.

Note that the information-theoretic consideration of the
cryptographic security yields a basic evaluation of the related
cryptographic features. On the other hand, the
information-theoretic security evaluation also provides
specification of the settings when it is possible to perform the
secret key recovery, but it does not specify and only indicates
the expected complexity of this problem. We show that with the aid
of a dedicated wire-tap encoder, the amount of uncertainty that
the adversary has about the key given all the information he could
gather during different passive or active attacks he can mount, is
a decreasing function of the sample available for cryptanalysis.
This means that the wire-tap encoder can indeed provide an
information theoretical security level over a period of time,
after which a large enough sample is collected and the function
tends to zero, entering a regime in which a computational security
analysis is needed.

An interesting issue for further work is the characterization of
the transition region in which the uncertainty drops from a
certain value to close to zero. Also, because after all, the
uncertainty tends to zero, the computational complexity based
evaluation of cryptographic security is a direction for a future
work.

%
%

\section*{Acknowledgments}

The research of F. Oggier is supported in part by the Singapore National
Research Foundation under Research Grant NRF-RF2009-07 and
NRF-CRP2-2007-03, and in part by the Nanyang Technological University
under Research Grant M58110049 and M58110070.
This work was done partly while M. Mihaljevi\'c was visiting the division of
mathematical sciences, Nanyang Technological University, Singapore, and
partly while F. Oggier was visiting the Research Center for Information
Security, Tokyo.

%
%


\begin{thebibliography}{99}
%
\bibitem{GSM-encryption}
GSM Technical Specifications: European Telecommunications
Standards Institute (ETSI), {\em Digital cellular
telecommunications system (Phase 2+); Physical layer on the radio
path; General description}, TS 100 573 (GSM 05.01),
http://www.etsi.org.
%
%
\bibitem{GSM-coding}
GSM Technical Specifications: European Telecommunications
Standards Institute (ETSI), {\em Digital cellular
telecommunications system (Phase 2+); Channel Coding}, TS 100 909
(GSM 05.03), http://www.etsi.org.
%
%
\bibitem{fossorier-IEEE-IT-2007} M. Fossorier, M.J. Mihaljevi\'{c} and H. Imai,
``Modeling Block Encoding Approaches for Fast Correlation
Attack'', {\em IEEE Transactions on Information Theory}, vol. 53,
no. 12, pp. 4728-4737, Dec. 2007.
%
%
\bibitem{gilbert}
H. Gilbert, M. Robshaw and H. Sibert, ``An Active Attack against
HB$^+$ - a Provably Secure Lightweight Authentication Protocol'',
{\em IEE Electronics Letters}, vol. 41, no. 21, pp. 1169-1170,
2005.
%
%
\bibitem{gilbert-FC2008} H. Gilbert, M.J.B. Robshaw and Y Seurin,
``Good Variants of HB$^+$ are Hard to Find'', Financial
Cryptography and Data Security 2008, {\em Lecture Notes in
Computer Science}, vol. 5143, pp. 156-170, 2008.
%
%
\bibitem{gilbert-EUROCRYPT2008}
H. Gilbert, M.J.B. Robshaw and Y. Seurin, "HB$^{\#}$: Increasing
the Security and Efficiency of HB$^+$", EUROCRYPT2008, {\em
Lecture Notes in Computer Science}, vol. 4965, pp. 361-378, 2008.
%
%
\bibitem{hellman}
M.E. Hellman, ``A cryptanalytic time-memory trade-off'', {\em IEEE
Transactions on Information Theory}, vol. 26, pp. 401-406, July
1980.
%
%
\bibitem{hopper-ASIACRYPT2001}
N. Hopper and M. Blum, ``Secure Human Identification Protocols'',
ASIACRYPT 2001, {\em Lecture Notes in Computer Science}, vol.
2248, pp. 52-66, 2001.
%
%
\bibitem{juels-CRYPTO2005}
A. Juels and S. Weis, ``Authenticating Pervasive Devices with
Human Protocols'',  CRYPTO2005, {\em Lecture Notes in Computer
Science}, vol. 3621, pp. 293-308, 2005.
%
%
\bibitem{katz-EUROCRYPT2006}
J. Katz and J.S. Shin, ``Parallel and Concurrent Security of the
HB and HB$^+$ Protocols'', EUROCRYPT2006, {\em Lecture Notes in
Computer Science}, vol. 4004, pp. 73-87, 2006.
%
%
\bibitem{jendal}
H.N. Jendal, Y.J.B. Kuhn, and J.L. Massey, ``An
information-theoretic treatment of homophonic substitution'',
EUROCRYPT'89,  {\em Lecture Notes in Computer Science}, vol. 434,
pp. 382-394, 1990.
%
%
\bibitem{massey-1994}
J. Massey, ``Some Applications of Source Coding in Cryptography'',
{\em European Transactions on Telecommunications}, vol. 5, pp.
421-429, July-August 1994.
%
%
\bibitem{mihaljevic-IEEE-CL-2007} M.J. Mihaljevi\'{c}, M. Fossorier and
H. Imai, ``Security Evaluation of Certain Broadcast Encryption Schemes
Employing a Generalized Time-Memory-Data Trade-Off'', {\em IEEE Communications
Letters}, vol. 11, no. 12, pp. 988-990, Dec. 2007.
%
%
\bibitem{mihaljevic-Computimg2009}
M.J. Mihaljevi\'{c} and H. Imai, ``An approach for stream ciphers
design based on joint computing over random and secret data'',
{\em Computing}, vol. 85, no. 1-2, pp. 153-168, June 2009. (DOI:
10.1007/s00607-009-0035-x)
%
%
\bibitem{mihaljevic-IOSpress2009}
M.J. Mihaljevi\'{c}, "A Framework for Stream Ciphers Based on
Pseudorandomness, Randomness and Error-Correcting Coding", in
{\em Enhancing Cryptographic Primitives with Techniques from Error
Correcting Codes}, B. Preneel, at al Eds.,  Vol. 23 in the Series
Information and Communication Security, pp. 117-139,
IOS Press, Amsterdam, The Netherlands, June 2009. DOI:
10.3233/978-1-60750-002-5-117 (ISSN: 1874-6268; ISBN:
978-1-60750-002-5)
%
%
\bibitem{misa&frederique}
M. Mihaljevi\'{c} and F. Oggier, "A Wire-tap Approach to Enhance
Security in Communication Systems using the Encoding-Encryption
Paradigm", {\em IEEE ICT 2010 - Int. Comm. Conf.}, Proceedings,
pp. 484-489, April 2010.
%
%
\bibitem{ryabko}
B. Ryabko and A. Fionov, ``Efficient Homophonic Coding'', {\em
IEEE Transactions on Information Theory}, vol. 45, no. 6, pp.
2083-2094, Sept. 1999.
%
%
\bibitem{shannon}
C.E. Shannon, ``Communication theory of secrecy systems'', {\em
Bell Systems Technical Journal}, vol. 28, pp. 656-715, Oct. 1949.
%
%
\bibitem{thangaraj-IEEE-IT-2007}
A. Thangaraj, S. Dihidar, A.R. Calderbank, S.W. McLaughlin, and
J.-M. Merolla, "Applications of LDPC Codes to the Wiretap
Channel", {\em IEEE Trans. Information Theory}, vol. 53, no. 8,
pp. 2933-2945, August 2007 .
%
%
\bibitem{wyner} A.D. Wyner, ``The wire-tap channel'',
{\em Bell Systems Technical Journal}, vol. 54, pp. 1355-1387, Oct.
1975.
%
\end{thebibliography}
\end{document}